\newtheorem{fact}{Fact}
\newcommand{\qedsymb}{\hfill{\rule{2mm}{2mm}}}
\DeclareMathOperator{\sus}{\mathit{SUS}}
\DeclareMathOperator{\lsus}{\mathit{LSUS}}
\DeclareMathOperator{\sa}{\mathit{SA}}
\DeclareMathOperator{\rank}{\mathit{Rank}}
\DeclareMathOperator{\lcp}{\mathit{LCP}}
\DeclareMathOperator{\sls}{\mathit{SLS}}
\DeclareMathOperator{\can}{\mathit{Candidate}}
\newcommand{\remove}[1]{}
\title{Shortest Unique Substring Query Revisited\thanks{All 
    missed proofs and pseudocode can be found in the appendix.}}
\author{Atalay Mert \.{I}leri$^\dag$ \and M. O\u{g}uzhan
  K\"{u}lekci$^\ddag$ \and Bojian Xu$\S$\thanks{Corresponding
    author. Supported in part by EWU Faculty Grants for
    Research and Creative Works. }}
\institute{
$^\dag$Department of Computer Engineering, Bilkent University, Turkey\\
$^\ddag$T\"UB\.ITAK National Research Institute of Electronics and Cryptology, Turkey\\
$\S$ Department of Computer Science, Eastern Washington University, USA\\
\email{aileri@bilkent.edu.tr, oguzhan.kulekci@tubitak.gov.tr, bojianxu@ewu.edu}
}
\date{}
\begin{document}
\maketitle

\begin{abstract} 
  We revisit the problem of finding shortest unique substring (SUS)
  proposed recently by~\cite{PWY-ICDE2013}.  We propose an optimal
  $O(n)$ time and space algorithm that can find an SUS for every
  location of a string of size $n$. Our algorithm significantly
  improves the $O(n^2)$ time complexity needed by~\cite{PWY-ICDE2013}.
  We also support finding all the SUSes covering every location,
  whereas the solution in~\cite{PWY-ICDE2013} can find only one SUS
  for every location.  Further, our solution is simpler and easier to
  implement and can also be more space efficient in practice, since we
  only use the inverse suffix array and longest common prefix array of
  the string, while the algorithm in~\cite{PWY-ICDE2013} uses the
  suffix tree of the string and other auxiliary data structures. Our
  theoretical results are validated by an empirical study that shows
  our algorithm is much faster and more space-saving than
  the one in~\cite{PWY-ICDE2013}.
\end{abstract}

%

\section{Introduction}
\label{sec:intro}
Repetitive structure and regularity finding \cite{jewels} has received
much attention in stringology due to its comprehensive applications in
different fields, including natural language processing, computational
biology and bioinformatics, security, and data compression.  However,
finding the shortest unique substring (SUS) covering a given string
location was not studied, until recently it was proposed by Pei
\emph{et.\ al.}~\cite{PWY-ICDE2013}.  As pointed out
in~\cite{PWY-ICDE2013}, SUS finding has its own important usage in
search engines and bioinformatics. We refer readers
to~\cite{PWY-ICDE2013} for its detailed discussion on the applications
of SUS finding. Pei \emph{et.\ al.} proposed a solution that costs
$O(n^2)$ time and $O(n)$ space to find a SUS for every location of a
string of size $n$. In this paper, we propose an optimal $O(n)$ time
and space algorithm for SUS finding. Our method uses simpler data
structures that include the suffix array, the inverse suffix array,
and the longest common prefix array of the given string, whereas the
method in~\cite{PWY-ICDE2013} is built upon the suffix tree data
structure. Our algorithm also provides the functionality of finding
all the SUSes covering every location, whereas the method
of~\cite{PWY-ICDE2013} searches for only one SUS for every location.
Our method not only improves their results theoretically, the
empirical study also shows our method gains space saving by a factor
of 20 and a speedup by a factor of four. The speedup gained by our
method can become even more significant when the string becomes longer
due to the quadratic time cost of~\cite{PWY-ICDE2013}. Due to the very
high memory consumption of~\cite{PWY-ICDE2013}, we were not able to
run their method with massive data on our machine.

\subsection{Independence of our work.}
After we posted an initial version of this submission at
arXiv.org~\cite{IKX-CORR13} on December 11, 2013, we were contacted
via emails by the coauthors of~\cite{TIBT2014} and
\cite{HPT-submitted}, both of which solved the SUS finding using
$O(n)$ time and space. By the time we communicated, \cite{TIBT2014}
has been accepted but has not been published and \cite{HPT-submitted}
was still under review.  We were also offered with their paper drafts
and the source code of~\cite{TIBT2014}.  The methods for SUS finding
in both papers are based on the search for \emph{minimum unique
  substrings} (MUS), as what~\cite{PWY-ICDE2013} did. Our algorithm
takes a completely different approach and does not need to search for
MUS. The problem studied by \cite{HPT-submitted} is also more general,
in that they want to find SUS covering a given chunk of locations in
the string, instead of a single location considered
by~\cite{PWY-ICDE2013,TIBT2014} and our work.  So, by all means, our work is
independent and presents a different optimal algorithm for SUS
finding.  We also have included the performance comparison with the
algorithm of~\cite{TIBT2014} in the empirical study. The algorithm
from~\cite{HPT-submitted} cannot be empirically studied as the author
did not prefer to release the code until their paper is accepted.

\section{Preliminary}
\label{sec:prelim}

We consider a {\bf string} $S[1\ldots n]$, where
each character $S[i]$ is drawn from an
alphabet $\Sigma=\{1,2,\ldots, \sigma\}$. 
A {\bf substring} $S[i\ldots j]$
of $S$ represents $S[i]S[i+1]\ldots S[j]$ if $1\leq i\leq j \leq n$,
and is an empty string if $i>j$.
String $S[i'\ldots j']$ is a {\bf proper substring} of another string
$S[i\ldots j]$ if $i\leq i' \leq j' \leq j$ and $j'-i' < j-i$. 
The {\bf length} of a non-empty substring $S[i\ldots j]$, denoted as
$|S[i\ldots j]|$, is $j-i+1$. We define the length of an empty string
is zero. 
A {\bf prefix} of $S$ is a substring $S[1\ldots i]$
for some $i$, $1\leq i\leq n$. 
A {\bf proper prefix} $S[1\ldots i]$ is a prefix of $S$ where $i <
n$.
A {\bf suffix} of $S$ is a substring
$S[i\ldots n]$ for some $i$, $1\leq i\leq n$.  
A {\bf proper suffix} $S[i\ldots n]$ is a suffix of $S$ where $i >
1$.
We say the character $S[i]$ occupies the string {\bf location} $i$.
We say the substring $S[i\ldots j]$ {\bf covers} the $k$th location of
$S$, if $i\leq k \leq j$.  
For two strings $A$ and $B$, we write ${\bf A=B}$ (and say $A$ is {\bf
  equal} to $B$), if $|A|= |B|$ and $A[i]=B[i]$ for 
$i=1,2,\ldots, |A|$.  
%
We say $A$ is lexicographically smaller than $B$,
denoted as ${\bf A < B}$, if (1) $A$ is a proper prefix of $B$, or (2)
$A[1] < B[1]$, or (3) there exists an integer $k > 1$ such that
$A[i]=B[i]$ for all $1\leq i \leq k-1$ but $A[k] < B[k]$.
A substring
$S[i\ldots j]$ of $S$ is {\bf unique}, if there does not exist
another substring $S[i'\ldots j']$ of $S$, such that 
$S[i\ldots j] = S[i'\ldots j']$ but $i\neq i'$. 
A substring is a {\bf repeat} if it is not unique. 

\begin{definition}
\label{def:sus}
For a particular string location $k\in \{1,2,\ldots, n\}$,  
the {\bf shortest unique substring (SUS) covering location} ${\bf k}$, denoted
as $\mathit{\bf SUS}_{\bf k}$, is 
a unique substring $S[i\ldots j]$, such that (1) $i\leq k \leq j$, and 
(2) there is no other unique substring $S[i'\ldots j']$ of $S$, such
that $i'\leq k \leq j'$ and $j'-i' < j-i$. 
\end{definition}
For any string location $k$, $\sus_k$ must exist, because the string
$S$ itself can be $\sus_k$ if none of the proper substrings of $S$ is
$\sus_k$. Also there might be multiple candidates for $\sus_k$. For
example, if $S={\tt abcbb}$, then $\sus_2$ can be either $S[1,2]={\tt
  ab}$ or $S[2,3]={\tt bc}$.

For a particular string location $k\in \{1,2,\ldots, n\}$, the
{\bf left-bounded shortest unique substring (LSUS) starting at location $k$},
denoted as ${\mathit{\bf LSUS}_{\bf k}}$, is a unique substring $S[k\ldots j]$,
such that either $k=j$ or any proper prefix of $S[k\ldots j]$ is not
unique. 
Note that $\lsus_1=\sus_1$, which always exists.  However, for an
arbitrary location $k\geq 2$, $\lsus_k$ may not exist. For example, if
$S={\tt abcabc}$, then none of $\{\lsus_4, \lsus_5, \lsus_6\}$
exists. 
An {\bf up-to-$j$ extension of} ${\mathit{\bf LSUS}_{\bf k}}$, denoted 
as
${\mathit{\bf LSUS}_{\bf k}^{\bf j}}$,
is the substring $S[k\ldots j]$, where  $k+|\lsus_k| \leq j \leq n$.


\remove{

\begin{table}
\center
\def\0{\phantom{0}}
{\footnotesize
\begin{tabular}{c|c|c|l}
\hline 
$i$ & $\lcp[i]$  & $\mathit{\sa}[i]$ & suffixes\\
\hline
\hline
\01 & 0 & 11\0  &{\tt i}\\
\02 & 1 & \08\0  & {\tt  ippi}\\
\03 & 1 & \05\0  & {\tt  issippi}\\
\04 & 4 & \02\0  & {\tt  ississippi}\\
\05 & 0 & \01\0  & {\tt  mississippi}\\
\06 & 0 & 10\0  & {\tt  pi}\\
\07 & 1 &  \09\0  & {\tt ppi}\\
\08 & 0 & \07\0  & {\tt sippi}\\
\09 & 2  & \04\0  & {\tt sissippi}\\
10 & 1  & \06\0  & {\tt ssippi}\\
11 & 3 & \03\0  & {\tt ssissippi}\\
12 & 0 & -- & --\\
\hline
\end{tabular}
}
\bigskip
\caption{The suffix array and the lcp array of an example string $S={\tt mississippi}$.}
\label{tab:suflcp}
\end{table}

}

The {\bf suffix array} $\sa[1\ldots n]$ of the string $S$ is a
permutation of $\{1,2,\ldots, n\}$, such that for any $i$ and $j$,
$1\leq i < j \leq n$, we have $S[\sa[i]\ldots n] < S[\sa[j]\ldots n]$.
That is, $\sa[i]$ is the starting location of the $i$th suffix in
the sorted order of all the suffixes of $S$.
The {\bf rank array} $\rank[1\ldots n]$ is the inverse of the suffix
array. That is, $\rank[i]=j$ iff $\sa[j]=i$. 
The {\bf longest common prefix (lcp) array} $\lcp[1\ldots n+1]$ is an
array of $n+1$ integers, such that for $i=2,3,\ldots, n$, $\lcp[i]$ is
the length of the lcp of the two suffixes $S[\sa[i-1]\ldots n]$ and
$S[\sa[i]\ldots n]$. We set $\lcp[1]=\lcp[n+1]=0$.  In the literature,
the lcp array is often defined as an array of $n$ integers. We include
an extra zero at $\lcp[n+1]$ is only to simplify the description 
of out upcoming
algorithms.  
%
The next Lemma~\ref{lem:lsus} shows that, by using the rank array and
the lcp array of the string $S$, it is easy to calculate any $\lsus_i$ if
it exists or to detect that it does not exist.

\begin{lemma}
\label{lem:lsus}
For $i=1,2,\ldots,n$: 
$$
\lsus_i = 
\left \{
\begin{array}{ll}
S[i\ldots i + L_i], & \textrm{\ \ \ if \ \ } i+L_i \leq n\\
\emptyset & \textrm{\ \ \ otherwise}
\end{array}
\right.
$$
where $L_i = \max\{\lcp[\rank[i]],\lcp[\rank[i]+1]\}$ and $\emptyset$
means $\lsus_i$ does not exist. 
\end{lemma}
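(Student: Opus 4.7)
My plan is to reduce the lemma to a single structural claim: $L_i$ is exactly the length of the longest prefix of the suffix $S[i\ldots n]$ that is \emph{not} unique in $S$. Once that is established, the two cases of the lemma fall out immediately: if $i + L_i \leq n$, then extending this longest repeated prefix by one character yields a unique substring of length $L_i + 1$ starting at $i$, which must be $\lsus_i$; and if $i+L_i > n$, then even the full suffix $S[i\ldots n]$ is a repeat, so no unique substring can start at $i$ and $\lsus_i$ is undefined.

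To prove the structural claim, I would invoke the standard suffix-array fact that for any two suffixes whose ranks are $r_1 < r_2$, the length of their longest common prefix equals $\min\{\lcp[r_1+1], \lcp[r_1+2], \ldots, \lcp[r_2]\}$. First, a prefix $S[i\ldots i+\ell-1]$ of length $\ell$ is non-unique iff some other suffix $S[j\ldots n]$ shares a prefix of length $\geq \ell$ with $S[i\ldots n]$. Using the fact above, the maximum over all $j\neq i$ of the lcp between $S[j\ldots n]$ and $S[i\ldots n]$ is achieved at one of the immediate neighbors of $\rank[i]$ in the suffix array, namely $\sa[\rank[i]-1]$ or $\sa[\rank[i]+1]$; otherwise the min along the path in the lcp array would be bounded above by one of the two neighboring lcp values anyway. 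Hence this maximum is exactly $L_i = \max\{\lcp[\rank[i]], \lcp[\rank[i]+1]\}$.

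From this, the prefix of length $L_i$ is a repeat (it matches the corresponding neighbor's prefix), and the prefix of length $L_i+1$, if it exists in $S$, is unique (otherwise $L_i$ would be larger by the same argument). By the definition of $\lsus_i$, which requires $S[i\ldots j]$ to be unique while every proper prefix $S[i\ldots j-1]$ is a repeat, this forces $|\lsus_i| = L_i+1$ whenever the $(L_i+1)$-st character of the suffix exists, i.e., $i+L_i \leq n$; otherwise $\lsus_i$ does not exist.

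The only subtle point, and what I would take care with, is the boundary behavior of the sentinel values $\lcp[1]=\lcp[n+1]=0$: I need to verify that when $\rank[i]$ is $1$ or $n$, treating the missing neighbor as contributing lcp $0$ is correct, which it is because in that case there is simply no suffix on that side of $S[i\ldots n]$ in lex order. Otherwise, the proof is essentially a clean rewrite of the suffix-array/lcp folklore; I do not foresee a real technical obstacle once the min-over-lcp-range fact is taken as given.
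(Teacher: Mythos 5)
Your proof is correct and follows essentially the same route as the paper's: both hinge on the fact that $L_i$ is the maximum lcp between $S[i\ldots n]$ and any other suffix, so that $S[i\ldots i+L_i-1]$ is the longest repeated (or empty) prefix and $S[i\ldots i+L_i]$, when it fits within $S$, is the shortest unique one. The only difference is that you actually justify this fact via the range-minimum property of the lcp array and check the sentinel boundary cases, whereas the paper simply asserts it.
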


\section{SUS Finding for One Location}
\label{sec:one}

In this section, we want to find the SUS covering a given 
location $k$ using $O(k)$ time and space. We start with finding
the leftmost one if $k$ has multiple SUSes. In the end,
we will show a trivial extension to find all the SUSes
covering location $k$ with the same time
and space complexities, if $k$ has multiple SUSes.

\begin{lemma}
\label{lem:ext}
  Every SUS is either an LSUS or an extension of an LSUS.
\end{lemma}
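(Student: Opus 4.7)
The plan is to take an arbitrary $\sus_k = S[i\ldots j]$ and show that it coincides with the LSUS anchored at its own left endpoint $i$, or else strictly extends it. The hook is that $\sus_k$ is, by definition, a unique substring that starts at position $i$, so the set of unique substrings of $S$ with starting position $i$ is non-empty. Therefore $\lsus_i$ exists, and by its defining minimality property we must have $|\lsus_i| \leq |\sus_k|$.

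From this inequality I would split into two cases on $|\lsus_i|$ versus $|\sus_k|$. If $|\lsus_i| = |\sus_k|$, then both strings start at location $i$ and have the same length, so they are literally the same substring and $\sus_k = \lsus_i$. If $|\lsus_i| < |\sus_k|$, then writing $\sus_k = S[i\ldots j]$ we have $j = i + |\sus_k| - 1 \geq i + |\lsus_i|$, which matches the condition $i + |\lsus_i| \leq j \leq n$ in the definition of an up-to-$j$ extension; hence $\sus_k = \lsus_i^j$, an extension of $\lsus_i$.

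The only subtlety, and what I would call out explicitly, is to rule out the case $|\lsus_i| > |\sus_k|$: this cannot happen, because $\sus_k$ itself is a unique substring starting at $i$ whose length is smaller than $|\lsus_i|$, contradicting the definition of $\lsus_i$ as the shortest such substring. Once this observation is in place the argument closes immediately, so I do not anticipate any real obstacle; the lemma is essentially a tidy consequence of the definitions of $\sus_k$ and $\lsus_i$, with the main content being the reduction of the search space for SUSes to LSUSes and their extensions, which will be exploited by the algorithm to follow.
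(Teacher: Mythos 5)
Your proof is correct and is essentially the paper's argument in contrapositive-free form: the paper notes that if $S[i\ldots j]=\sus_k$ were neither $\lsus_i$ nor an extension of it, it would have to be a repeat, contradicting uniqueness, while you argue directly that uniqueness of $\sus_k$ forces $\lsus_i$ to exist with $|\lsus_i|\leq|\sus_k|$ and then split on equality versus strict inequality. Both hinge on the same observation, so this matches the paper's proof.
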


\remove{
Example 1: $S={\tt abcbca}$, then $\sus_2 = S[1,2] =
  {\tt ab}$, which is $\lsus_1$.
  Example 2: $S={\tt abcbc}$, then $\sus_2 = S[1,2] =
  {\tt ab}$, which is an extension of $\lsus_1=S[1]$ up to
  location $2$.
}

\begin{algorithm}[t]
{\small
  \caption{Find $\sus_k$. Return the leftmost one if $k$ has multiple SUSes.}
\label{algo:one}
\KwIn{The location index $k$, and the rank array and 
      the lcp array of the string $S$} 
\KwOut{$\sus_k$. The leftmost one will be returned if $k$ has multiple
SUSes.}

\smallskip 

$start \leftarrow 1$; $length \leftarrow n$ \tcp*{The start location
  and length of the best candidate for $\sus_k$.}
\label{line:start}

\smallskip 

\For{$i=1, \ldots, k$\label{line:for}}{
  $L \leftarrow \max\{\lcp[\rank[i]],\lcp[\rank[i]+1]\}$\;
 \If(\tcp*[f]{$\lsus_i$ exists.}){$i+L\leq n$\label{line:if-1}}{ 
    \tcc{Extend $\lsus_i$ up to $k$ if needed. Resolve the tie by picking the leftmost SUS.}
    \If{$\max\{L+1,k-i+1\} < length$\label{line:if-2}}{
      $start \leftarrow i$;
      $length \leftarrow \max\{L+1,k-i+1\}$\;
    }
  }
  \lElse(\tcp*[f]{Early stop.}){break\label{line:earlystop}}
  
}
Print {$\sus_k \leftarrow (start,length)$}\;
}
\end{algorithm}

By Lemma~\ref{lem:ext}, we know $\sus_k$ is either an LSUS or an
extension of an LSUS, and the starting location of that LSUS 
must be on or before location $k$. Then the algorithm for finding
$\sus_k$ for any given string location $k$ is simply to calculate
$\lsus_1, \lsus_2, \ldots, \lsus_k$ if existing, using Lemma~\ref{lem:lsus}. During this calculation,
 if any
LSUS does not cover the location $k$, we simply extend that LSUS up
to location $k$. We will pick the shortest one
among all the LSUS or their up-to-$k$
extensions as $\sus_k$. We resolve the tie by picking the leftmost candidate. 
It is possible this procedure can early stop if it finds an LSUS does not exist,
because that indicates all the other remaining LSUSes do not exist either.
Algorithm~\ref{algo:one} gives the pseudocode of this
procedure, where we represent $\sus_k$ by
its  two attributes: {\tt start} and {\tt length}, the starting
location and the length of $\sus_k$, respectively.

\begin{lemma}
\label{lem:one}
Given a string location  $k$ and the rank and the lcp array of the
string $S$, Algo.\ref{algo:one} can find $\sus_k$ using $O(k)$ time.
If there are multiple candidates for $\sus_k$, the leftmost one is returned. 
\end{lemma}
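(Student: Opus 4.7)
The plan is to argue three things: that the set of candidates the algorithm considers is guaranteed to contain $\sus_k$, that for each starting position $i \leq k$ the candidate length computed by the algorithm matches the length of the shortest unique substring starting at $i$ and covering location $k$, and that the early stop is sound. The overall strategy is driven by Lemma~\ref{lem:ext}: writing $\sus_k = S[i^\ast \ldots j^\ast]$, it is either $\lsus_{i^\ast}$ or a strict extension of it. In either case $\lsus_{i^\ast}$ exists and satisfies $i^\ast \leq k$, so it suffices to enumerate starting positions $i \in \{1,\ldots,k\}$, which is exactly what the \texttt{for} loop on line~\ref{line:for} does.

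First I would establish, for every $i$ with $\lsus_i$ existing, that the shortest unique substring starting at $i$ and covering position $k$ has length $\max\{|\lsus_i|,\, k-i+1\}$. The quantity $|\lsus_i| = L_i + 1$ is read off directly from Lemma~\ref{lem:lsus}, and since any extension of a unique substring stays unique, the optimal candidate starting at $i$ either coincides with $\lsus_i$ (when $\lsus_i$ already reaches $k$) or is the extension of $\lsus_i$ up to exactly position $k$. This is precisely the expression $\max\{L+1, k-i+1\}$ evaluated on line~\ref{line:if-2}. Taking the minimum over $i$ therefore yields $|\sus_k|$, and because the corresponding $(\texttt{start}, \texttt{length})$ is explicitly recorded, the algorithm outputs a genuine SUS.

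The main obstacle is justifying the early stop on line~\ref{line:earlystop}: once $\lsus_i$ fails to exist at some iteration $i$, I must show that $\lsus_j$ fails to exist for every $j > i$ as well, so that no later candidate is overlooked. The key observation is that $\lsus_i$ not existing forces $S[i \ldots n]$ to be a repeat, so it has another occurrence beginning at some $i' < i$ (the occurrence cannot begin past $i$ because it would run off the end of $S$). Setting $\delta = i - i' > 0$ and shifting by $\delta$ shows $S[j - \delta \ldots n - \delta] = S[j \ldots n]$ for every $j > i$; restricting this equality to any prefix yields an earlier occurrence of $S[j \ldots j'']$, so $\lsus_j$ does not exist either. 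The remaining details are straightforward: tie-breaking follows from the strict inequality on line~\ref{line:if-2} combined with the increasing iteration order over $i$, which guarantees that among multiple minimizers the one with smallest (hence leftmost) starting index is retained; and since each iteration performs $O(1)$ array lookups and arithmetic, the total running time is $O(k)$.
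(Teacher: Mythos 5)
Your proof is correct and follows essentially the same approach as the paper's: enumerate $\lsus_1,\ldots,\lsus_k$ via Lemma~\ref{lem:lsus}, take the minimum over the up-to-$k$ extensions (justified by Lemma~\ref{lem:ext}), break ties leftmost via the strict inequality, and early-stop when an LSUS fails to exist. The only difference is that you inline the shifting argument for the early stop, which the paper instead delegates to part~(3) of Lemma~\ref{lem:exist}.
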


\begin{theorem}
\label{thm:one}
For any location $k$ in the string $S$, we can find $\sus_k$ using
$O(n)$ time and  space.  If there are multiple candidates for
$\sus_k$, the leftmost one is returned.
\end{theorem}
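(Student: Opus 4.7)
\begin{proofsketch}
The plan is to combine Lemma~\ref{lem:one} with standard linear-time construction results for the suffix-array-based data structures. Since Algorithm~\ref{algo:one} already solves the problem in $O(k)$ time \emph{given} the $\rank$ and $\lcp$ arrays, and since $k\leq n$, the only remaining task is to show that these two arrays can be prepared in $O(n)$ time and $O(n)$ space.

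First, I would build the suffix array $\sa[1\ldots n]$ of $S$. Because the alphabet $\Sigma=\{1,2,\ldots,\sigma\}$ is an integer alphabet of size polynomial in $n$, one of the well-known linear-time suffix array construction algorithms (for instance the DC3/skew algorithm or the SA-IS algorithm) produces $\sa$ in $O(n)$ time and $O(n)$ working space. Next, the rank array $\rank[1\ldots n]$ is obtained by a single pass over $\sa$, setting $\rank[\sa[i]]=i$ for every $i$, which is $O(n)$ time and $O(n)$ space. Finally, I would compute $\lcp[1\ldots n+1]$ in $O(n)$ time from $S$, $\sa$, and $\rank$ using Kasai \emph{et.\ al.}'s algorithm; the extra sentinel entry $\lcp[n+1]=0$ required by our definition is a $O(1)$ adjustment.

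With $\rank$ and $\lcp$ in hand, I invoke Algorithm~\ref{algo:one} on the input location $k$. By Lemma~\ref{lem:one}, it returns $\sus_k$ (the leftmost one in case of ties) in $O(k)=O(n)$ time using only a constant number of scalar variables on top of the precomputed arrays. Summing the preprocessing cost and the query cost gives the claimed $O(n)$ time and $O(n)$ space bound, and correctness (including the leftmost-tie-breaking guarantee) is inherited directly from Lemma~\ref{lem:one}.

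There is no real obstacle here; the theorem is essentially a packaging of Lemma~\ref{lem:one} together with linear-time preprocessing. The only point that deserves explicit mention in the write-up is that each of the three preprocessing steps is known to run in $O(n)$ time and space under our integer-alphabet assumption, so that the preprocessing does not dominate the $O(k)$ query cost of Algorithm~\ref{algo:one}.
\end{proofsketch}
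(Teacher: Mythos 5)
Your proposal is correct and follows essentially the same route as the paper's own proof: build the suffix array, rank array, and lcp array in $O(n)$ time and space (the paper likewise cites a linear-time suffix array construction and Kasai \emph{et.\ al.}'s lcp algorithm), then invoke Algorithm~\ref{algo:one} and inherit correctness, the $O(k)\subseteq O(n)$ query bound, and the leftmost tie-breaking from Lemma~\ref{lem:one}. No gaps.
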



It is trivial to extend Algorithm~\ref{algo:one} to find all the SUSes
covering location $k$ as follows. We can first use
Algo.~\ref{algo:one} to find the leftmost $\sus_k$. Then we start
over again to recheck $\lsus_1\ldots \lsus_k$ or their up-to-$k$
extensions, and return those whose length is equal to the length of
$\sus_k$. Due to the page limit, we show the pseudocode of this procedure
in Algorithm~\ref{algo:one-all} in the
appendix.  This  procedure clearly costs an extra $O(k)$
time. Combining the results from Theorem~\ref{thm:one}, we get the
following theorem.

\begin{theorem}
\label{thm:one-all}
We can find all the SUSes covering any given location $k$ using $O(n)$ time
and space.
\end{theorem}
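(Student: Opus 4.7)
The plan is to combine Theorem~\ref{thm:one} with a second pass over the same candidate set used by Algorithm~\ref{algo:one}. First I would invoke Algorithm~\ref{algo:one} on location $k$ to compute the leftmost $\sus_k$ together with its length $\ell^\star$. By Theorem~\ref{thm:one}, this step costs $O(n)$ time and $O(n)$ space (the space is dominated by the rank and lcp arrays, which are built once).

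Next I would argue, via Lemma~\ref{lem:ext}, that the complete set of SUSes covering location $k$ is contained in the following finite family: for every $i \in \{1,2,\ldots,k\}$ such that $\lsus_i$ exists, include the string $S[i\ldots i+L_i]$ if $|S[i\ldots i+L_i]| \geq k-i+1$, and otherwise include its up-to-$k$ extension $S[i\ldots k]$; here $L_i=\max\{\lcp[\rank[i]],\lcp[\rank[i]+1]\}$ as in Lemma~\ref{lem:lsus}. Any SUS covering $k$ must start at some location $i\le k$, must be unique (hence contain the corresponding $\lsus_i$ as a prefix), and must extend at least up to $k$; so this family exhausts all candidates.

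I would then run a second loop identical in structure to the main \texttt{for} of Algorithm~\ref{algo:one}, but instead of maintaining a single best candidate, report every index $i$ whose associated candidate has length exactly $\ell^\star$. The same early-stop condition (when $i+L_i>n$, so $\lsus_i$ does not exist) still applies because it implies no LSUS exists at any later starting location either. This second pass performs $O(1)$ work per iteration and touches only $O(k)=O(n)$ entries of the rank and lcp arrays, so it costs $O(n)$ additional time and no additional space.

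The only subtle point in the argument is to make sure that no two distinct indices $i\ne i'$ in $\{1,\ldots,k\}$ produce the same SUS substring that would then be reported twice; this is immediate because each reported substring has a different starting location in $S$, and we are reporting SUSes as occurrences (start, length) rather than as string values, matching the output convention established for the leftmost case in Algorithm~\ref{algo:one}. Putting the two passes together yields $O(n)$ time and $O(n)$ space, which proves the theorem.
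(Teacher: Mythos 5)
Your proposal is correct and follows essentially the same route as the paper: a first pass of Algorithm~\ref{algo:one} to obtain the length of the leftmost $\sus_k$, followed by a second $O(k)$ pass over $\lsus_1,\ldots,\lsus_k$ (or their up-to-$k$ extensions) reporting every candidate of that length, which is exactly the paper's Algorithm~\ref{algo:one-all}. The justification via Lemma~\ref{lem:ext} and the early-stop argument match the paper's reasoning, so nothing further is needed.
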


\section{SUS Finding for Every Location}
\label{sec:every}

In this section, we want to find $\sus_k$ for every location
$k=1,2,\ldots,n$.  If $k$ has multiple SUSes, the leftmost one will be
returned.  In the end, we will show a trivial extension to return
all SUSes for every location.

A natural solution is to iteratively use Algo.\ref{algo:one} as a
subroutine to find every $\sus_k$, for $k=1,2,\ldots,n$. However, the
total time cost of this solution will be $O(n)+\sum_{k=1}^n O(k) =
O(n^2)$, where $O(n)$ captures the time cost for the construction of
the rank array and the lcp array and $\sum_{k=1}^n O(k)$ is the total
time cost for the $n$ instances of Algo.~\ref{algo:one}. We want to
have a solution that costs a total of $O(n)$ time and space, which follows
that the amortized cost for finding each SUS is $O(1)$.

By Lemma~\ref{lem:ext}, we know that every SUS must be an LSUS or an
extension of an LSUS.  The next Lemma~\ref{lem:ext2} further says if
$\sus_k$ is an extension of an LSUS, it has special properties and can
be quickly obtained from $\sus_{k-1}$.

\begin{lemma}
\label{lem:ext2}
For any $k\in \{2,3,\ldots,n\}$, if $\sus_k$ is an extension of an
LSUS, then (1) $\sus_{k-1}$ must be a substring whose right boundary
is the character $S[k-1]$, and (2) $\sus_k$ is the substring $\sus_{k-1}$
appended by the character $S[k]$.
\end{lemma}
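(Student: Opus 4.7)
The plan is to let $\sus_k = S[i\ldots j]$ and pin down its exact shape using the hypothesis that it is an extension of $\lsus_i$; a fact I will use repeatedly is that any right extension of a unique substring is unique. First I would show $j = k$: if $j > k$, then $S[i\ldots j-1]$ still contains $\lsus_i$ as a prefix (because $j-1 \geq i + |\lsus_i| - 1$), hence is unique, still covers location $k$, and is strictly shorter than $\sus_k$, contradicting the minimality of $\sus_k$. Therefore $\sus_k = S[i\ldots k]$, and in particular $i \leq k-1$ since $k = j \geq i + |\lsus_i| \geq i + 1$.

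Next I would determine $|\sus_{k-1}|$ exactly. The substring $S[i\ldots k-1]$ contains $\lsus_i$ as a prefix, so it is unique and covers location $k-1$, yielding the upper bound $|\sus_{k-1}| \leq k - i$. For the matching lower bound, I would assume for contradiction that some $\sus_{k-1} = S[a\ldots b]$ has $b - a + 1 < k - i$ and case-split on $b$: if $b \geq k$, then $\sus_{k-1}$ already covers $k$ and is shorter than $\sus_k$; otherwise $b = k-1$, and the one-character right extension $S[a\ldots k]$ is unique, covers $k$, and has length $|\sus_{k-1}| + 1 < k - i + 1 = |\sus_k|$. Both alternatives contradict the minimality of $\sus_k$, so $|\sus_{k-1}| = k - i$.

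For part (1), I would rerun the same case split on an arbitrary $\sus_{k-1} = S[a\ldots b]$, now of the exact length $k - i$: the case $b \geq k$ again contradicts the minimality of $\sus_k$ by the same one-step-shorter trick, forcing $b = k - 1$. Combined with the length, this forces $\sus_{k-1} = S[i\ldots k-1]$, and part (2) is then immediate since $\sus_k = S[i\ldots k]$. The one point that needs care is that part (1) is a universal statement over every valid $\sus_{k-1}$, so the $b \geq k$ versus $b = k-1$ dichotomy must be invoked twice, once against a strict length inequality to rule out a shorter $\sus_{k-1}$ and once against equality to pin down its right boundary; everything else is routine bookkeeping from the minimality of SUS and the preservation of uniqueness under right extension.
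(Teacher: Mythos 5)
Your proof is correct, and it takes a noticeably different route from the paper's. The paper writes $\sus_k = S[i\ldots k]$ outright and then pins down the \emph{starting position} of $\sus_{k-1}$: it argues that $|\lsus_t|\geq|\sus_k|=k-i+1$ for every $t\in\{i+1,\ldots,k\}$ (else $\lsus_t$ or its up-to-$k$ extension would beat $\sus_k$), so no unique substring starting strictly to the right of $i$ can be short enough to serve as $\sus_{k-1}$, while anything starting to the left of $i$ is already longer than the unique witness $S[i\ldots k-1]$; hence $S[i\ldots k-1]$ is the only candidate. You instead never reason about LSUSes at intermediate positions: you first determine $|\sus_{k-1}|=k-i$ exactly (upper bound from the witness $S[i\ldots k-1]$, lower bound because any shorter unique cover of $k-1$ either already covers $k$ or becomes a unique cover of $k$ of length $<|\sus_k|$ after one right extension), and then pin down the \emph{right boundary} by the same dichotomy applied at equality. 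Your version buys two things: it is more self-contained, relying only on minimality of SUS and closure of uniqueness under right extension, and it explicitly justifies that the extension must end exactly at location $k$ (your step $j=k$), a fact the paper asserts without argument. The paper's version buys the auxiliary fact $|\lsus_t|\geq|\sus_k|$ for $i<t\leq k$, which is of independent interest for the algorithmic discussion. You are also right to treat part (1) as a universal statement over all valid choices of $\sus_{k-1}$; both proofs end up showing the stronger fact that $S[i\ldots k-1]$ is the unique candidate.
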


\subsection{The overall strategy}
\label{subsec:strategy}
We are ready to 
present the overall strategy for finding SUS of every location, 
by using  Lemma~\ref{lem:ext} and~\ref{lem:ext2}.
We will calculate all the SUS in the order of $\sus_1, \sus_2, \ldots,
\sus_n$.  That means when we want to calculate $\sus_k$, $k\geq 2$, we
have had $\sus_{k-1}$ calculated already.  Note that $\sus_1 =
\lsus_1$, which is easy to calculate using Lemma~\ref{lem:lsus}.  Now
let's look at the calculation of a particular $\sus_k$, $k\geq 2$.  By
Lemma~\ref{lem:ext}, we know $\sus_k$ is either an LSUS or an
extension of an LSUS.  By Lemma~\ref{lem:ext2}, we also know if
$\sus_k$ is an extension of an LSUS, then the right boundary of
$\sus_{k-1}$ must $S[k-1]$ and $\sus_k$ is just $\sus_{k-1}$ appended
by the character $S[k]$. Suppose when we want to calculate $\sus_k$, we
have already calculated the shortest LSUS covering location $k$ or
have known the fact that no LSUS covers location $k$.  Then, by using
$\sus_{k-1}$, which has been calculated by then, and the shortest LSUS
covering location $k$, we will be able to calculate $\sus_k$ as
follows:

Case 1: If the right boundary of $\sus_{k-1}$ is not $S[k-1]$, then we
know $\sus_k$ cannot be an extension of an LSUS (the contrapositive of
Lemma~\ref{lem:ext2}). Thus, $\sus_k$ is just the shortest LSUS
covering location $k$, which must be existing in this case.

Case 2: If the right boundary of $\sus_{k-1}$ is $S[k-1]$, then
  $\sus_k$ may or may not be an extension of an LSUS.  We will
  consider two possibilities:
(1) If the shortest LSUS covering location $k$ exists, we will
  compare its length with $|\sus_{k-1}|+1$, and pick the shorter one
  as $\sus_k$. If both have the same length, we solve the tie by
  picking the one whose starting location index is smaller.
(2)  If no LSUS covers location $k$, $\sus_k$
  will just be $\sus_{k-1}$ appended by $S[k]$.

  Therefore, the real challenge here, by the time we want to calculate
  $\sus_k$, $k\geq 2$, is to ensure that we would already have
  calculated the shortest LSUS covering location $k$ or we would
  already have known the fact that no LSUS covers location $k$.

\subsection{Preparation}
\label{subsec:lsus}
We now focus on the calculation of the shortest LSUS covering every
string location k, denoted by ${\mathit{\bf SLS_k}}$.
Let $\mathit{\bf Candidate_i^k}$ denote the shortest one among
those of $\{\lsus_1, \ldots, \lsus_k\}$ that exist and cover location $i$.   
The leftmost one will be picked if multiple choices exist for both 
$\sls_k$ and $\can_i^k$.
For an arbitrary $k$, $1\leq k \leq n$, $\sls_k$ may not exist,
because the location $k$ may not be covered by any LSUS. However, 
if $\sls_k$ exists, by the definition of  
$\sls$ and $\can$, we have:

\begin{fact}
\label{fact:can}
$\sls_k=\can_k^{k}=\can_k^{k+1}=\cdots =\can_k^{n}$, if $\sls_k$ exists.
\end{fact}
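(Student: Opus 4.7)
The plan is to show the entire chain of equalities follows essentially from a single observation: an LSUS starting at location $i$ cannot possibly cover location $k$ unless $i \le k$. From this, the candidates that enter the computation of $\sls_k$ are drawn from $\{\lsus_1,\ldots,\lsus_k\}$, and enlarging the index range beyond $k$ contributes no new candidate covering location $k$.

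First I would unpack the definitions. Each $\lsus_i$, when it exists, is by definition a substring of the form $S[i\ldots j]$, so its set of covered locations is $\{i,i+1,\ldots,j\}$. In particular, $\lsus_i$ can cover location $k$ only if $i\le k$. Consequently, the collection of existing LSUSes that cover location $k$ is a subset of $\{\lsus_1,\lsus_2,\ldots,\lsus_k\}$, and every element of this subset is a legitimate candidate both for $\sls_k$ and for $\can_k^k$.

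Next I would match the two selection rules. By definition, $\sls_k$ is the shortest (leftmost on ties) existing LSUS that covers location $k$, and $\can_k^k$ is the shortest (leftmost on ties) element of $\{\lsus_1,\ldots,\lsus_k\}$ that exists and covers location $k$. By the previous paragraph these two optimizations range over the same finite set under the same tie-breaking rule, so $\sls_k=\can_k^k$ whenever $\sls_k$ exists.

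Finally, for each $j$ with $k<j\le n$, the set $\{\lsus_{k+1},\ldots,\lsus_j\}$ consists of LSUSes whose starting locations exceed $k$, so by the opening observation none of them covers location $k$. Hence enlarging the index range from $\{1,\ldots,k\}$ to $\{1,\ldots,j\}$ does not introduce any new candidate, and the optimum is unchanged: $\can_k^k=\can_k^{k+1}=\cdots=\can_k^n$. Chaining this with the previous equality yields the stated fact. There is no real obstacle here; the only thing to keep careful track of is the tie-breaking rule, which is identical across both definitions and so is preserved automatically.
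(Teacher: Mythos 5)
Your proof is correct and follows exactly the reasoning the paper relies on: the paper states this as a Fact following immediately ``by the definition of $\sls$ and $\can$,'' and the content of that immediacy is precisely your observation that $\lsus_i$ starts at $i$ and hence cannot cover location $k$ unless $i\le k$, so widening the index range past $k$ adds no candidates. Your write-up simply makes explicit what the paper leaves implicit, including the shared tie-breaking rule.
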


Our goal is to ensure $\sls_k$ will have been known when we want to
calculate $\sus_k$, so we calculate every $\sls_k$ following the same
order $k=1,2,\ldots,n$, at which we calculate all SUSes.
Because we need to know every $\lsus_i$, $i\leq k$ in order to
calculate $\sls_k$ (Fact~\ref{fact:can}), we will walk through the
string locations $k=1,2,\ldots,n$: at each walk step $k$, we calculate
$\lsus_k$ and maintain $\can_i^k$ for every string location $i$ that has been covered by 
at least one of $\lsus_1,\lsus_2,\ldots, \lsus_k$. 
Note that $\can_i^k = \sls_i$ for every $i\leq k$
(Fact~\ref{fact:can}). Those $\can_i^k$ with $i \leq k$ would have been
used as $\sls_i$ in the calculation of $\sus_i$. So, after each
walk step $k$, we will only need to maintain the candidates  for
location  after $k$.

\begin{lemma}
\label{lem:exist}
(1) $\lsus_1$ always exists. (2) If $\lsus_k$
exists, then $\{\lsus_1, \lsus_2, \ldots, \lsus_{k}\}$ all exist. 
(3) If $\lsus_k$
does not exist, then none of $\{\lsus_k, \lsus_{k+1}, \ldots, \lsus_n\}$ exists.   
\end{lemma}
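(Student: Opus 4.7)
The plan is to handle the three parts in order, with the central content in part~(2); parts~(1) and~(3) will follow easily.

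For part~(1), I would observe that $S[1\ldots n]$ is trivially unique in $S$ since no other substring has length $n$. Hence the suffix $S[1\ldots n]$ admits at least one unique prefix, so its shortest unique prefix $\lsus_1$ exists.

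The heart of the argument is part~(2), which I would prove through the following monotonicity claim: \emph{if $S[k\ldots j]$ is unique in $S$ and $1 \leq i \leq k$, then $S[i\ldots j]$ is also unique}. Suppose for contradiction $S[i\ldots j] = S[i'\ldots j']$ with $i' \neq i$. Equal lengths give $j - i = j' - i'$, so $j' = j + \delta$ where $\delta = i' - i \neq 0$. Matching the equality position by position and restricting to offsets $k-i,\ldots,j-i$ yields $S[k\ldots j] = S[k+\delta\ldots j+\delta]$. It remains to check $1 \leq k+\delta \leq n$: the lower bound is $k+\delta = k + i' - i \geq i' \geq 1$ (using $k \geq i$), and the upper bound is $k+\delta \leq j+\delta = j' \leq n$ (using $k \leq j$). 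Since $k+\delta \neq k$, this exhibits a second occurrence of $S[k\ldots j]$, contradicting its uniqueness. Now apply the claim to $\lsus_k = S[k\ldots j]$: for every $i \in \{1,\ldots,k\}$ the substring $S[i\ldots j]$ is unique, so the suffix starting at $i$ admits a unique prefix and therefore $\lsus_i$, its shortest unique prefix, exists.

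Part~(3) is just the contrapositive of part~(2): if $\lsus_j$ existed for some $j \geq k$, then applying (2) to $j$ would force $\lsus_k$ to exist, contradicting the hypothesis. Hence $\lsus_j$ does not exist for any $j \geq k$. The only slightly delicate step in the whole argument is the range check $1 \leq k+\delta \leq n$ inside the monotonicity claim; everything else is bookkeeping.
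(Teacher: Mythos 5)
Your proof is correct and follows essentially the same route as the paper: part (1) from the uniqueness of $S[1\ldots n]$, part (2) from the observation that a substring containing a unique substring is itself unique (your "monotonicity claim" is just a more carefully spelled-out version of the paper's remark that $S[i\ldots \gamma_k]$ is unique because it has the unique $S[k\ldots \gamma_k]$ as a suffix), and part (3) as the complementary direction, which you phrase as the contrapositive of (2) while the paper argues directly that a substring of a repeat is a repeat.
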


We know at the end of the $k$th walk step, we have calculated $\lsus_1, \lsus_2,
\ldots, \lsus_k$. By Lemma~\ref{lem:exist}, we know that there exists some
$\ell_k$, $1\leq \ell_k\leq k$, such that
$\lsus_1,\ldots,\lsus_{\ell_k}$ all exist, but $\lsus_{\ell_k+1}\ldots
\lsus_k$ do not exist. If $\ell_k=k$, that means $\lsus_1,\ldots,\lsus_k$
all exist. Let $\gamma_k$ denote the right boundary of
$\lsus_{\ell_k}$, i.e., $\lsus_{\ell_k} = S[\ell_k \ldots \gamma_k]$. We
know every location $j= 1, \ldots, \gamma_k$ has its candidate $\can_j^k$
calculated already, because every such location $j$ has been covered by at least
one of the LSUSes among $\lsus_1,\ldots,\lsus_{\ell_k}$. We also know
if $\gamma_k < n$, every location $j=\gamma_k+1,\ldots,n$ still does
not have its candidate calculated, because every such location $j$ has not been
covered by any LSUS from $\lsus_1,\ldots,\lsus_{\ell_k}$ that we have
calculated at the end of the $k$th walk step. 

\begin{lemma}
\label{lem:cover}
At the end of the $k$th walk step, 
if $\gamma_k > k$:
 for any $i$ and $j$, $k \leq i < j \leq
\gamma_k$, $\can_j^k$ also covers location $i$.
\end{lemma}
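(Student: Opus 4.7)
The plan is to unwind the definitions of $\can_j^k$ and $\gamma_k$ and observe that the left endpoint of $\can_j^k$ sits at most at position $k$, while the right endpoint already reaches past $i$ because it reaches $j$.

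More concretely, I would first write $\can_j^k = \lsus_p$ for some $p \in \{1, 2, \ldots, k\}$; such a $p$ exists because $j \leq \gamma_k$ and every location in $\{1, \ldots, \gamma_k\}$ is covered by at least one LSUS in $\{\lsus_1, \ldots, \lsus_{\ell_k}\} \subseteq \{\lsus_1, \ldots, \lsus_k\}$, so the set of $\lsus_p$ that cover $j$ from among $\lsus_1, \ldots, \lsus_k$ is nonempty and $\can_j^k$ is well-defined. Then $\lsus_p = S[p \ldots p + |\lsus_p| - 1]$ covers $j$, meaning $p \leq j \leq p + |\lsus_p| - 1$.

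Next I would check the two coverage inequalities for $i$. For the left endpoint: since $p \leq k$ by the range from which $\can_j^k$ is drawn, and $k \leq i$ by hypothesis, we immediately get $p \leq i$. For the right endpoint: from $i < j \leq p + |\lsus_p| - 1$ we conclude $i \leq p + |\lsus_p| - 1$. Combining the two inequalities gives $p \leq i \leq p + |\lsus_p| - 1$, i.e., $\can_j^k$ covers location $i$.

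There is essentially no obstacle beyond being careful to invoke the right definitions; the nontrivial-sounding hypothesis $\gamma_k > k$ is only used to guarantee that $\can_j^k$ is defined in the first place (so that all locations $j \leq \gamma_k$ have a candidate computed by the end of the $k$th walk step, via the observation that they are covered by some LSUS among $\lsus_1, \ldots, \lsus_{\ell_k}$). The key combinatorial content is simply $p \leq k \leq i < j \leq p + |\lsus_p| - 1$.
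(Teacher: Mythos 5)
Your proof is correct and follows essentially the same route as the paper's: the paper likewise observes that $\can_j^k$ starts at some position $p\leq k$ and extends through $j$, so from $p\leq k\leq i<j$ it must also cover $i$. Your version merely spells out the well-definedness of $\can_j^k$ and the endpoint inequalities more explicitly.
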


\begin{proof}
$\can_j^k$ is a substring starting somewhere on or before $k$ and
going through the location $j$. Because $k\leq i < j$, it is obvious
that $\can_j^k$ goes through location $i$. \qed
\end{proof}

\begin{lemma}
\label{lem:canlength}
At the end of the $k$th walk step, 
if $\gamma_k > k$, then  
$|\can_k^k| \leq |\can_{k+1}^k| \leq \ldots \leq |\can_{\gamma_k}^k|$.
\end{lemma}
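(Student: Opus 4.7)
The plan is to reduce this monotonicity claim to Lemma~\ref{lem:cover} together with the definition of $\can_j^k$ as the \emph{shortest} among the LSUSes computed so far that cover location $j$. It suffices to prove the adjacent inequality $|\can_j^k| \leq |\can_{j+1}^k|$ for every $j$ with $k \leq j < \gamma_k$; chaining these for $j = k, k+1, \ldots, \gamma_k - 1$ yields the full statement.

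First I would fix such a $j$ and let $\can_{j+1}^k = \lsus_p$ for some $p \leq k$; such a $p$ exists because $j+1 \leq \gamma_k$ guarantees that at least one of $\lsus_1, \ldots, \lsus_{\ell_k}$ covers location $j+1$. Then I would invoke Lemma~\ref{lem:cover} with the pair $(i,j') = (j, j+1)$: since $k \leq j < j+1 \leq \gamma_k$, the lemma says that $\can_{j+1}^k = \lsus_p$ also covers location $j$. Hence $\lsus_p$ is a legitimate candidate for the shortest LSUS among $\lsus_1, \ldots, \lsus_k$ that covers $j$, and by the definition of $\can_j^k$ we conclude $|\can_j^k| \leq |\lsus_p| = |\can_{j+1}^k|$.

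Finally, chaining these inequalities gives $|\can_k^k| \leq |\can_{k+1}^k| \leq \cdots \leq |\can_{\gamma_k}^k|$, as desired. I do not expect any real obstacle here: the previous lemma already packages the geometric fact that a substring whose left endpoint is at most $k$ and which reaches a later location $j+1$ must also span the intermediate location $j$; once that is in hand, monotonicity is an immediate consequence of the minimality in the definition of $\can_j^k$. The only point to be slightly careful about is the tie-breaking rule (leftmost among equal-length LSUSes), but since we are only comparing lengths, it plays no role in the inequality.
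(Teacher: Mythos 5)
Your proof is correct and follows essentially the same route as the paper's: both arguments combine Lemma~\ref{lem:cover} (a candidate reaching a later location must also cover the earlier one) with the minimality in the definition of $\can_j^k$. The only cosmetic differences are that you argue for adjacent pairs and chain, and phrase the step directly via minimality rather than as the paper's contradiction; neither changes the substance.
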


  \begin{proof}
    By Lemma~\ref{lem:cover}, we know $\can_j^k$ also covers location
    $i$, for any $i$ and $j$, $k\leq i < j \leq \gamma_k$.  Thus, if
    $|\can_j^k| < |\can_i^k|$, location $i$'s current candidate should
    be replaced by location $j$'s candidate, because that gives
    location $i$ a shorter (better) candidate. However, the current
    candidate for location $i$ is already the shortest candidate. It
    is a contradiction.  So, $|\can_i^k| \leq |\can_j^k|$, which proves
    the lemma. \qed
  \end{proof}

\remove{

\begin{lemma}
\label{lem:canshare}
If $\gamma_k \geq k$: for any $i$ and $j$, $k \leq i < j \leq
\gamma_k$, if $|\can_i^k| = |\can_j^k|$ but $\can_i^k \neq \can_j^k$, 
we can replace location $i$'s candidate with location $j$'s candidate.
\end{lemma}

\begin{proof}
  By Lemma~\ref{lem:cover}, we know $\can_j^k$ also covers location
  $i$.  Thus, if $|\can_i^k| = |\can_j^k|$, we can replace location
  $i$'s candidate with $\can_j^k$.
\end{proof}

}

\begin{lemma}
\label{lem:lsus2}
For each $i = 2,3,\ldots,n$: $ |\lsus_i| \geq |\lsus_{i-1}|-1 $
\end{lemma}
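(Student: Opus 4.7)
The plan is to argue by contradiction, exploiting the defining property of $\lsus_{i-1}$ that every proper prefix of it is non-unique. Assume both $\lsus_{i-1}$ and $\lsus_i$ exist (if $\lsus_{i-1}$ has length at most $1$ the inequality is immediate, and when $\lsus_i$ fails to exist the statement is vacuous under Lemma~\ref{lem:exist}). Set $\ell = |\lsus_{i-1}|$ and $m = |\lsus_i|$, so $\lsus_{i-1}=S[i-1\ldots i-1+\ell-1]$ and $\lsus_i=S[i\ldots i+m-1]$. The goal is to derive a contradiction from the hypothetical inequality $m \leq \ell - 2$.

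Under that hypothesis, $m+1 \leq \ell-1$, so the substring $P=S[i-1\ldots i-1+m]$ is a \emph{proper} prefix of $\lsus_{i-1}$. By the very definition of LSUS, every proper prefix of $\lsus_{i-1}$ must be a repeat. Hence $P$ occurs in $S$ at some starting position $p\neq i-1$, giving $S[p\ldots p+m] = S[i-1\ldots i-1+m]$. Dropping the first character from both sides yields $S[p+1\ldots p+m] = S[i\ldots i+m-1] = \lsus_i$. Since $p\neq i-1$ we also have $p+1\neq i$, so this is a second occurrence of $\lsus_i$ in $S$, contradicting its uniqueness. Therefore $m \geq \ell-1$, which is the desired bound.

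There is no real obstacle here; the only point requiring a bit of attention is the bookkeeping of indices when we ``peel off'' the leading character (in particular checking $m\geq 1$ so the shifted substring is non-empty, which follows from assuming $\lsus_i$ exists, and checking $p+1\neq i$, which is exactly the condition $p\neq i-1$ coming from the non-uniqueness of the proper prefix). The whole argument is essentially a one-step reduction from a shorter-than-necessary $\lsus_i$ to a forbidden additional occurrence.
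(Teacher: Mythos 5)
Your proof is correct and is essentially the paper's argument in contrapositive form: the paper extends the assumed-too-short $\lsus_i=S[i\ldots k]$ one character leftward to get a unique proper prefix $S[i-1\ldots k]$ of $\lsus_{i-1}$, contradicting the LSUS definition, whereas you start from that same proper prefix being a repeat and peel off its first character to exhibit a second occurrence of $\lsus_i$. Same one-step idea, same substring, just with the contradiction landed on the uniqueness of $\lsus_i$ instead of the minimality of $\lsus_{i-1}$.
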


\begin{proof}
  We prove the lemma by contradiction. Suppose $\lsus_{i-1} = S[i-1 \ldots
  j]$ for some $j$, $i-1\leq j\leq n$.  If $|\lsus_i| <
  |\lsus_{i-1}|-1$, it means $\lsus_i = S[i \ldots k]$, where $i\leq
  k<j$. Because $S[i \ldots k]$ is unique, $S[i-1 \ldots k]$ is also unique, whose
  length however is shorter than $S[i-1 \ldots j]$.  This is a contradiction
  because $S[i-1 \ldots j]$ is already $\lsus_{i-1}$. Thus, the claim in the
  lemma is true. \qed
\end{proof}

\subsection{Finding $\sls$ for every location}

\begin{algorithm}[t]
{\footnotesize
  \caption{The sequence of function calls 
$\mathit{FindSLS(1)}, \mathit{FindSLS(2)}, \ldots,
\mathit{FindSLS(n)}$
returns $\sls_1, \sls_2, \ldots, \sls_n$, if the corresponding $\sls$ exists;
otherwise, {\tt null} will be returned.}
\label{algo:list}


\smallskip 

Construct $Rank[1\ldots n]$ and $LCP[1\ldots n]$ of the string $S$\;
Initialize an empty $List$\tcp*{Each node has four fields: \{{\tt ChunkStart, ChunkEnd, start, length}\}.}
$head \leftarrow  0$; $tail \leftarrow  0$ \tcp*{Reference to the head and tail node of the $List$}

\bigskip 

\SetKwIF{If}{ElseIf}{Sls}{if}{then}{else if}{$\mathit{FindSLS}(k)$}{endif}

\Sls{
\tcc{Process $\lsus_k$, if it exists.}
\smallskip 

$L \leftarrow \max\{\lcp[\rank[k]], \lcp[\rank[k]+1]\}$\;

\smallskip 

\If(\tcp*[f]{$\lsus_k$ exists.}){$k+L \leq n$}{
  \tcp{Add a new list element at the tail, if necessary.}
  \lIf(\tcp*[f]{$List$ was empty.}){$head=0$}{
    $List[1]\leftarrow (k,k+L,k,L+1)$; 
    $head\leftarrow 1$;
    $tail\leftarrow 1$
  }
  \ElseIf{$k + L > List[tail].ChunkEnd$}{
    $tail ++$;
    $List[tail] \leftarrow (List[tail-1].ChunkEnd+1, k+L, k,L+1)$;
  }
  \smallskip
  \tcc{Update candidates and merge the nodes whose candidates can
    be shorter. Resolve the tie by picking the leftmost one.}
  $j \leftarrow tail$\;
  \lWhile{$j \geq head$ \textrm{and} $List[j].length >
    L+1$ \label{line:while}}
    {$j--$\;}
  $List[j+1] \leftarrow (List[j+1].ChunkStart, List[tail].ChunkEnd, k, L+1)$;
  $tail \leftarrow j+1$\; 
}  

\lIf(\tcp*[f]{The list is not empty.}){$head \neq 0$}{
  $\sls_k \leftarrow (head.start, head.length)$ 
}

\lElse{
  $\sls_k \leftarrow {\tt (null, null)}$ \tcp*{$\sls_k$ does not exist.}
}

\bigskip 

\tcc{Discard the information about location $k$ from the $List$.}

\smallskip 

\If(\tcp*[f]{$List$ is not empty}){$head > 0$\label{line:del-start}}{
  \If{$List[head].ChunkEnd \leq k$}{
    $head ++$\tcp*{Delete the current head node}
    \lIf(\tcp*[f]{$List$ becomes empty}){$head > tail$}{
      $head \leftarrow 0$; $tail \leftarrow 0$;
    }
  }
  \lElse{$List[head].ChunkStart \leftarrow k+1$\label{line:del-end}\;}
}

\Return{$\sls_k$}

}
}

\end{algorithm}

\noindent
{\bf Invariant.} We calculate $\sls_k$ for $k=1,2,\ldots, n$ 
by maintaining the following invariant at the end of every walk step
$k$:
(A) If $\gamma_k > k$, locations $\{k+1,k+2,\ldots,
\gamma_k\}$ will be cut into chunks, such that: (A.1)
All locations in one chunk have the same candidate. 
(A.2) Each chunk will be represented by a
linked list node of four fields: {\tt \{ChunkStart, ChunkEnd, start,
  length\}}, respectively representing the start and end location of the
chunk and the start and length of the candidate shared by all 
locations of the chunk.  (A.3) All nodes representing different
chunks will be connected into a linked list, which has a {\tt head} 
and a {\tt tail}, referring to the two nodes that represent the lowest
positioned chunk and the highest positioned chunk. 
(B) If $\gamma_k \leq k$, the linked list is empty.  

\bigskip

\noindent
{\bf Maintenance of the invariant.} We describe in an inductive
manner the procedure that maintains the invariant. 
Algo.~\ref{algo:list} shows the pseudocode of the procedure.  We
start with an empty linked list.

\bigskip 

\noindent
\emph{Base step: $k=1$.} We are making the first walk step. 
We first calculate $\lsus_1$ using
Lemma~\ref{lem:lsus}. We know $\lsus_1$ must exist.  Let's
  say $\lsus_1 = S[1\ldots \gamma_1]$ for some $\gamma_1\leq
  n$. Then, $\can_i^1 = \lsus_1$ for every $i=1,2,\ldots, \gamma_1$.
  We record all these candidates by using a single node
  $(1,\gamma_1,1,\gamma_1)$. This is the only node in the
  linked list and is pointed by both {\tt head} and {\tt tail}. We
  know $\sls_1 = \can_1^1$ (Fact~\ref{fact:can}), so we return
  $\sls_1$ by returning ${\tt (head.start,head.length)=(1,\gamma_1)}$.
  We then change {\tt head.ChunkStart} from $1$ to be $2$. If it turns
  out ${\tt head.ChunkEnd=\gamma_1 < 2}$, meaning $\lsus_1$ really
  covers location $1$ only, we delete the {\tt head} node from the
  linked list, which will then become empty.

\bigskip 

\noindent
\emph{Inductive step: $k\geq 2$.} We are making the $k$th walk step. 
We first calculate $\lsus_k$.

Case 1: $\lsus_k$ does not exist. (1) If {\tt head} does not exist.
It means location $k$ is covered neither by any of
$\lsus_1,\ldots,\lsus_{k-1}$ nor by $\lsus_k$, so $\sls_k$ simply does
not exist, and we will simply return ${\tt (null,null)}$ to indicate
that $\sls_k$ does not exist.  (2) If {\tt head} exists, we will
return ${\tt (head.start,head.length)}$ as $\sls_{k}$, because
$\can_k^k = \sls_k$ (Fact~\ref{fact:can}). Then we will remove the
information about location $k$ from the head by setting
$head.ChunkStart = k+1$. After that, we will remove the {\tt head}
node if it turns out that ${\tt head.ChunkEnd < head.ChunkStart}$.

Case 2: $\lsus_k$ exists. Let's say $\lsus_k = S[k\ldots \gamma_k]$,
  $\gamma_k \leq n$. By Lemma~\ref{lem:exist}, we know
  $\lsus_1,\ldots, \lsus_{k-1}$ all exist. Let $\gamma_{k-1}$ denote
  the right boundary of $\lsus_1,\ldots, \lsus_{k-1}$. By
  Lemma~\ref{lem:lsus2}, we know $\gamma_{k-1}$ is also the right
  boundary of $\lsus_{k-1}$, i.e., $\lsus_{k-1} =
  S[k-1 \ldots \gamma_{k-1}]$. Note that both $\gamma_{k-1} < k$ and
  $\gamma_{k-1} \geq k$ are possible.  
  (1) If {\tt head} does not exist, it means $\gamma_{k-1} < k$ and
  none of locations $\{k\ldots \gamma_k\}$ is covered by any of
  $\lsus_1,\ldots, \lsus_{k-1}$. We will insert a new node ${\tt (k,
    \gamma_k, k, \gamma_k-k+1)}$, which will be the only node in the
  linked list.
  (2) If {\tt head} exists, it means $\gamma_{k-1} \geq k$.  If ${\tt
    \gamma_k > tail.ChunkEnd = \gamma_{k-1}}$, we will first insert at
  the tail side of the linked list a new node ${\tt (tail.ChunkEnd+1,
    \gamma_k, k, \gamma_k-k+1)}$ to record the candidate information
  for locations in the chunk after $\gamma_{k-1}$ through $\gamma_k$.
  After the work in either (1) or (2) is finished, we will then travel
  through the nodes in the linked list from the tail side toward the
  head. We will stop when we meet a node whose candidate is shorter
  than or equal to $\lsus_k$ or when we reach the head end of the
  linked list. This travel is valid
  because of Lemma~\ref{lem:canlength}. We will merge all the nodes
  whose candidates are longer than $\lsus_k$ into one node.
  The chunk covered by the new node is the union of the chunks covered
  by the merged nodes, and the candidate of the new node obtained from
  merging is
  $\lsus_k$. This merge process ensures every location maintains its
  best (shortest) candidate by the end of each walk step, and also
  resolves the tie
  of multiple candidates by picking the
  leftmost one.  
  We will return ${\tt (head.start,head.length)}$ as
  $\sls_k$, because $\can_k^k = \sls_k$
  (Fact~\ref{fact:can}). Finally, we will remove the information about
  location $k$ from the head by setting $head.ChunkStart = k+1$. We
  will remove the {\tt head} node if it turns out that ${\tt
    head.ChunkEnd > head.ChunkStart}$.

\begin{lemma}
\label{lem:list-time}
Given the lcp array and the rank array of $S$, 
the amortized time cost of $\mathit{FindSLS}()$ is $O(1)$.
\end{lemma}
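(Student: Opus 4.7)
The plan is a standard aggregate amortized analysis: I will show that the total work over the full sequence $\mathit{FindSLS}(1), \mathit{FindSLS}(2), \ldots, \mathit{FindSLS}(n)$ is $O(n)$, from which the per-call amortized $O(1)$ bound follows immediately.

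First I would dispose of everything in $\mathit{FindSLS}(k)$ except the tail-side merging \texttt{while} loop on line~\ref{line:while}. Computing $L$ needs two constant-time lookups in the rank and lcp arrays; appending a new tail node (when $\lsus_k$ exists and extends past ${\tt List[tail].ChunkEnd}$) touches only $O(1)$ fields; returning $\sls_k$ from the head is $O(1)$; and the head-cleanup block on lines~\ref{line:del-start}--\ref{line:del-end} performs at most $O(1)$ work per call (at most one head-node removal). These pieces together contribute $O(n)$ over all $n$ calls.

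The crux is charging the \texttt{while} loop, and for this I would use the observation that the linked list contains at most $n$ distinct nodes over the whole run, because a new node is \emph{created} at most once per call (only in the tail-append branch, which inserts one node). Each ``true'' iteration of the loop strictly decreases $j$; once the loop exits, all nodes that previously sat at positions $j+2, \ldots, tail$ have been coalesced into position $j+1$ and are permanently destroyed from the list. Thus every true iteration can be charged to the destruction of one distinct previously-created node, so the total number of true iterations summed over all calls is at most $n$. Each call also performs at most one ``false'' test before exiting the loop, contributing another $O(n)$ in aggregate. Combining, the cumulative cost of the loop across all $n$ calls is $O(n)$.

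The one place I would slow down is the off-by-one between ``loop iterations'' and ``nodes destroyed'' (the loop always performs one extra failing test to exit) together with the boundary cases where the list is empty, where the newly-appended tail node is itself merged, or where the head is deleted in the same call that appends a tail node. These are all routine to verify against the pseudocode, but they are exactly the spots where sloppy bookkeeping could inflate the amortized cost; once they are checked, summing the three $O(n)$ contributions yields total $O(n)$ work over $n$ calls, i.e., amortized $O(1)$ per $\mathit{FindSLS}(k)$, as claimed.
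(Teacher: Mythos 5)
Your proof is correct and follows essentially the same route as the paper's: both isolate the merging \texttt{while} loop as the only non-constant-time part and bound its aggregate cost by $O(n)$ via the observation that at most $n$ nodes are ever created, nodes never split, and each true loop iteration can be charged to the destruction of a distinct node. Your version is in fact slightly more careful than the paper's about the per-call failing test and the iteration-versus-destruction off-by-one, but the underlying charging argument is identical.
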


\begin{algorithm}[t]
{\footnotesize
  \caption{Find $\sus_k$, $k=1,\ldots, n$. The leftmost one
is returned if $k$ has multiple SUSes.}
\label{algo:every}

\smallskip 
\For{$k \leftarrow 1 \ldots n$\label{line:for1}}{
   $(start, length)\leftarrow \mathit{FindSLS}(k)$\label{line:update}
   \tcp*{$\sls_k$;  It is ${\tt (null, null)}$ if $\sls_k$ does not exist. }
    
   \smallskip 

    \lIf{$k=1$\label{line:if-1.1}}{
      Print $\sus_k \leftarrow (start, length)$\label{line:if-1.2}\;}
    \lElseIf{$\sus_{k-1}.start + \sus_{k-1}.length -1 > k - 1$\label{line:if-2.1}}{
     Print $\sus_k \leftarrow (start, length) \label{line:if-2.2}$\; 
    }
    \lElseIf{$(start,length)={\tt (null,null)}$\label{line:if-3.1}}{
      Print $\sus_k \leftarrow (\sus_{k-1}.start, \sus_{k-1}.length + 1)$\label{line:if-3.2}\;
    }
    \lElseIf{$length < \sus_{k-1}.length+1$\label{line:if-5.1}}{
      Print $\sus_k \leftarrow (start, length)$\label{line:if-5.2}\;
    }
    \Else(\tcp*[f]{Resolve the tie by picking the leftmost one.})
    {
     Print $\sus_k \leftarrow (\sus_{k-1}.start, \sus_{k-1}.length + 1)$\label{line:if-4.2}
   }
}
}
\end{algorithm}

\subsection{Finding $\sus$ for every location}

Once we are able to sequentially calculate every
$\sls_k$ or detect  it does not exist,
we are ready to
calculate every $\sus_k$ by using the strategy described in
Section~\ref{subsec:strategy}. Algorithm~\ref{algo:every} gives the
pseudocode of the procedure. It calculates SUSes in the order
of $\sus_1,\sus_2,\ldots,\sus_n$ (Line~\ref{line:for1}). For each
location $k$, the function call at Line~\ref{line:update} is to
calculate $\sls_k$ or to find $\sls_k$ does not exist. 
Line~\ref{line:if-1.1}
handles the special case
where $\sus_1=\lsus_1=\sls_1$. 
The condition at Line~\ref{line:if-2.1} shows that $\sus_i$ cannot be
an extension of an LSUS (Lemma~\ref{lem:ext2}), so $\sus_k=\sls_k$,
which must exist.
Line~\ref{line:if-3.1} 
 handles the case where
$\sls_k$ does not exist, so $\sus_k$ must be $\sus_{k-1}$ appended by
$S[k]$.
Line~\ref{line:if-5.1} 
 handles the case where $\sls_k$ is shorter than
the one-character extension of $\sus_{k-1}$, so $\sus_k$ is
$\sls_k$. 
Line~\ref{line:if-4.2} handles the case where
$\sls_k$ is longer than the one-character extension of $\sus_{k-1}$, so 
$\sus_k$ is $\sus_{k-1}$ appended by
$S[k]$.
This also revolves the tie by picking the leftmost one if $k$ is
covered by multiple SUSes.

\begin{theorem}
\label{thm:time}
Algo.~\ref{algo:every} can find $\sus_1, \sus_2, \ldots, \sus_n$ of
string $S$ using a total of $O(n)$ time and space.
\end{theorem}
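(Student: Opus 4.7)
The plan is to separate the argument into a correctness part and a resource-usage part, since Lemma~\ref{lem:list-time} already supplies the key amortized bound for $\mathit{FindSLS}$. For correctness I would argue by induction on $k$ that after iteration $k$ of the outer loop in Algo.~\ref{algo:every}, the value printed for $\sus_k$ agrees with Definition~\ref{def:sus} (with ties broken by leftmost). The base case $k=1$ is immediate from Line~\ref{line:if-1.1}, since $\sus_1=\lsus_1=\sls_1$ and $\mathit{FindSLS}(1)$ returns $\sls_1$ by Lemma~\ref{lem:list-time} combined with the invariant maintained by Algo.~\ref{algo:list}. For $k\ge 2$, I would verify that the four mutually exclusive branches (Lines~\ref{line:if-2.1}, \ref{line:if-3.1}, \ref{line:if-5.1}, and the final \textbf{else}) are in one-to-one correspondence with the three cases of the strategy in Section~\ref{subsec:strategy}: the branch at Line~\ref{line:if-2.1} is Case~1 (right boundary of $\sus_{k-1}$ is not at $k-1$, so by the contrapositive of Lemma~\ref{lem:ext2}, $\sus_k=\sls_k$); Lines~\ref{line:if-3.1}, \ref{line:if-5.1}, and the final \textbf{else} together cover Case~2 by comparing $|\sls_k|$ to $|\sus_{k-1}|+1$, with the \textbf{else} branch correctly resolving ties to the leftmost candidate since $\sus_{k-1}$ extended by $S[k]$ has starting location $\sus_{k-1}.\mathit{start}\le k-1$, which is at most the start of any $\sls_k$.

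For the time bound, I would sum the contributions: the rank and lcp arrays are built in $O(n)$ in a preprocessing step (already part of Algo.~\ref{algo:list}); each of the $n$ calls to $\mathit{FindSLS}$ costs $O(1)$ amortized by Lemma~\ref{lem:list-time}; and each iteration of the outer loop of Algo.~\ref{algo:every} performs only constant additional work after the call, because the branch conditions only read the $O(1)$ fields of $\sus_{k-1}$ and of the returned $(start,length)$. Hence the total is $O(n)+n\cdot O(1)+n\cdot O(1)=O(n)$.

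For the space bound, I would argue that the rank and lcp arrays occupy $O(n)$ words, and the linked list used inside $\mathit{FindSLS}$ also occupies $O(n)$ words because its nodes represent a partition of a contiguous set of locations $\{k,\ldots,\gamma_k\}$ into disjoint chunks, so the list has at most $n$ nodes at any time; moreover, each merge strictly decreases the number of live nodes, and each node ever created is charged to one insertion, so the total space allocated is $O(n)$. The algorithm also keeps only $\sus_{k-1}$ (four integers) between iterations and prints $\sus_k$ immediately, so no additional $\Theta(n)$ storage is required for the output.

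The main obstacle I anticipate is making the case analysis tight and airtight at the boundary between Case~1 and Case~2. In particular, one must verify that when $\sus_{k-1}.\mathit{start}+\sus_{k-1}.\mathit{length}-1=k-1$ and $\sls_k$ exists with $|\sls_k|=|\sus_{k-1}|+1$, choosing the one-character extension of $\sus_{k-1}$ really is the leftmost SUS covering $k$; this uses that the start of $\sus_{k-1}$ is $\le k-1<\sls_k.\mathit{start}$. A secondary obstacle is ensuring that the invariant set up in Section~\ref{subsec:lsus}, together with the deletion step at Lines~\ref{line:del-start}--\ref{line:del-end} of Algo.~\ref{algo:list}, really guarantees that the returned pair is $\sls_k$ (and ${\tt (null,null)}$ exactly when no LSUS covers $k$); this is the content of Lemma~\ref{lem:list-time} and I would simply cite it rather than re-prove it.
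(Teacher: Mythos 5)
Your resource analysis matches the paper's proof essentially verbatim: $O(n)$ preprocessing for the rank and lcp arrays, $O(1)$ amortized per $\mathit{FindSLS}$ call by Lemma~\ref{lem:list-time}, $O(1)$ extra work per iteration, and an $O(n)$-node linked list for space, so the proposal is correct and takes the same route (your added correctness induction just makes explicit what the paper argues informally in Sections~\ref{subsec:strategy} and the discussion of Algorithm~\ref{algo:every}). One small citation slip: Lemma~\ref{lem:list-time} asserts only the amortized time bound, not that $\mathit{FindSLS}(k)$ returns $\sls_k$ correctly --- that correctness is carried by the invariant-maintenance argument in the text, so you should cite that instead.
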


\subsection{Extension: finding all the SUSes for every location.}
It is possible that a particular location can have multiple SUSes.
For example, if $S={\tt abcbb}$, then $\sus_2$ can be either
$S[1,2]={\tt ab}$ or $S[2,3]={\tt bc}$. Algorithm~\ref{algo:every}
only returns one of them and resolve the tie by picking the leftmost
one. However, it is trivial to modify
Algorithm~\ref{algo:every} to return all the SUSes of
every location, without changing Algorithm~\ref{algo:list}.

Suppose a particular location $k$ has multiple SUSes. We know, at the
end of the $k$th walk step but for the linked list update, $\sls_k$
returned by Algorithm~\ref{algo:list} is recorded by the {\tt head}
node and is the leftmost one among all the SUSes that are LSUS and
cover location $k$.  Because every string location maintains its
shortest candidate and due to Lemma~\ref{lem:canlength}, all the other
SUSes that are LSUS and cover location $k$ are being recorded by other
linked list nodes that are immediately following the {\tt head} node.
This is because if those other SUSes are not being recorded, that
means the location right after the head node's chunk has a candidate
longer than $\sus_k$ or does not have a candidate calculated yet, but
that location is indeed covered by a $\sus_k$ at the end of the $k$th
walk step. It's a contradiction. Same argument can be made to the
other next neighboring locations that are covered by $\sus_k$.

Therefore, finding all the SUSes covering location $k$ becomes
easy---simply go through the linked list nodes from the {\tt head}
node toward the {\tt tail} node and report all the LSUSes whose
lengths are equal to the length of $\sus_k$ that we have found. If
the rightmost character of $\sus_{k-1}$ is $S[k-1]$ and the substring $\sus_{k-1}$
appended by $S[k]$ has the same length, that substring 
will be reported too. 
Due to the page limit, the updated code is given in
Algorithm~\ref{algo:all} in the appendix, where the {\tt flag} is used
to note in what cases it is possible to have multiple SUSes and thus
we need to check the linked list nodes
(Line~\ref{line:all-1}--\ref{line:all-2}).  The overall time cost of
maintaining the linked list data structure (the sequence of function
calls $\mathit{FindSLS(1)}, \mathit{FindSLS(2)}, \ldots,
\mathit{FindSLS(n)}$) is still $O(n)$.  The time cost of reporting the
SUSes covering a particular location becomes $O(occ)$, where $occ$ is
the number of SUSes that cover that location.


\begin{figure}[t]
\begin{center}
\begin{tabular}{cc}
\includegraphics[scale=.8]{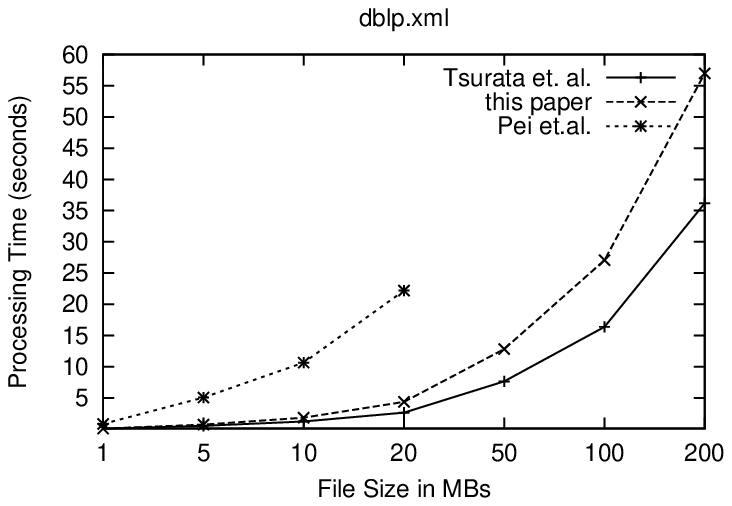} &
\includegraphics[scale=.8]{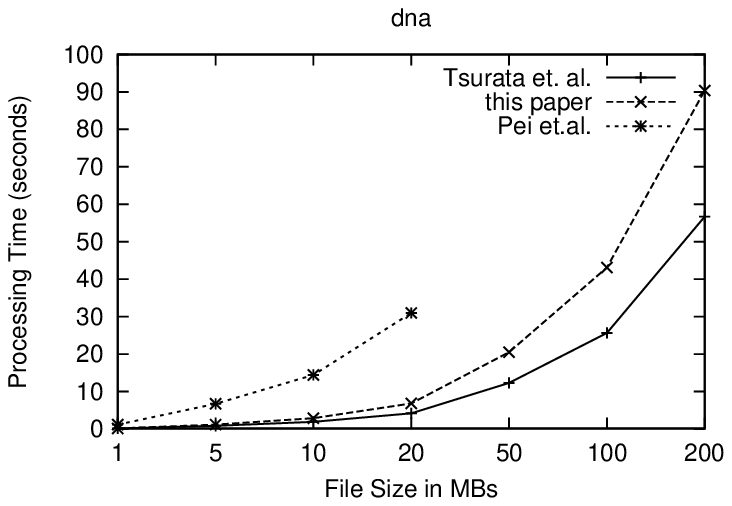} \\ 
\includegraphics[scale=.8]{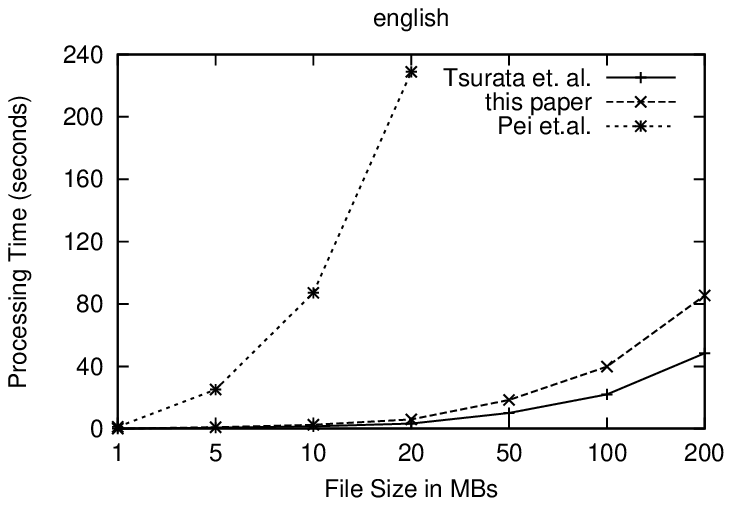} & 
\includegraphics[scale=.8]{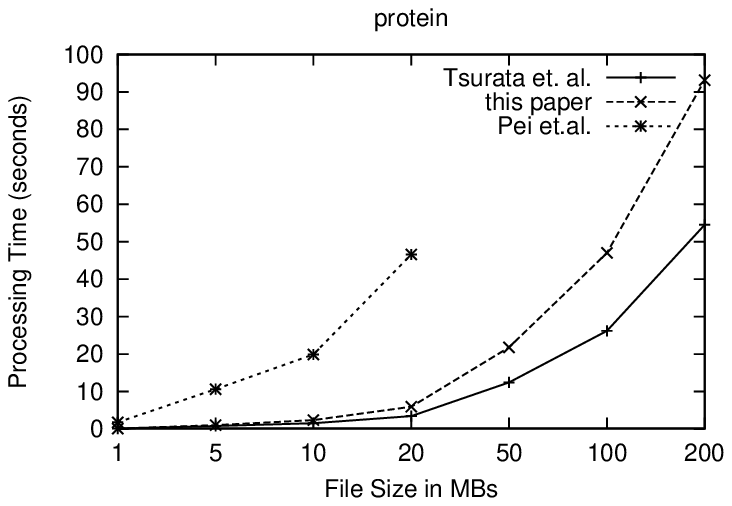} 
\end{tabular}
\end{center}
\caption{The processing speed of RSUS, OSUS, and this study's proposal
  on several files of different sizes.}
\label{fig::exp-time}
\end{figure}

\begin{figure}[t]
\begin{center}
\begin{tabular}{cc}
 \includegraphics[scale=.8]{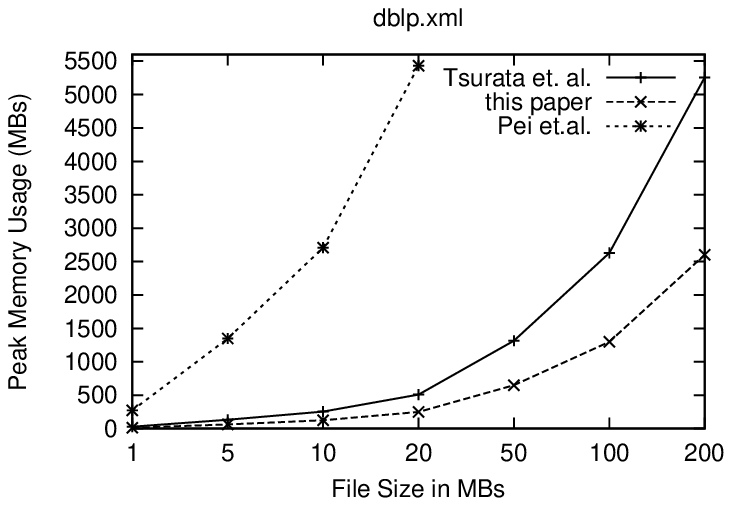} & 
\includegraphics[scale=.8]{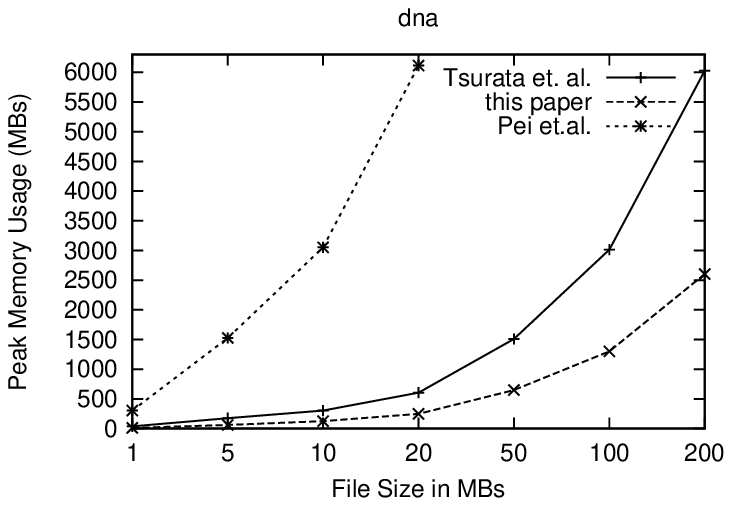} \\ 
\includegraphics[scale=.8]{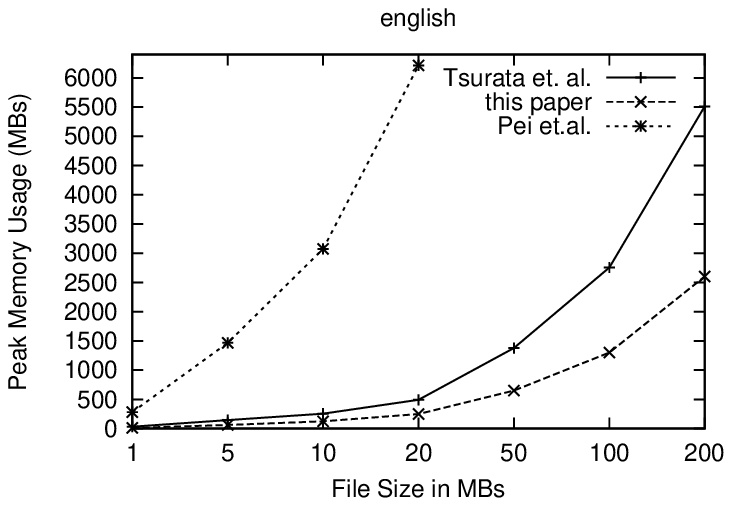} & 
\includegraphics[scale=.8]{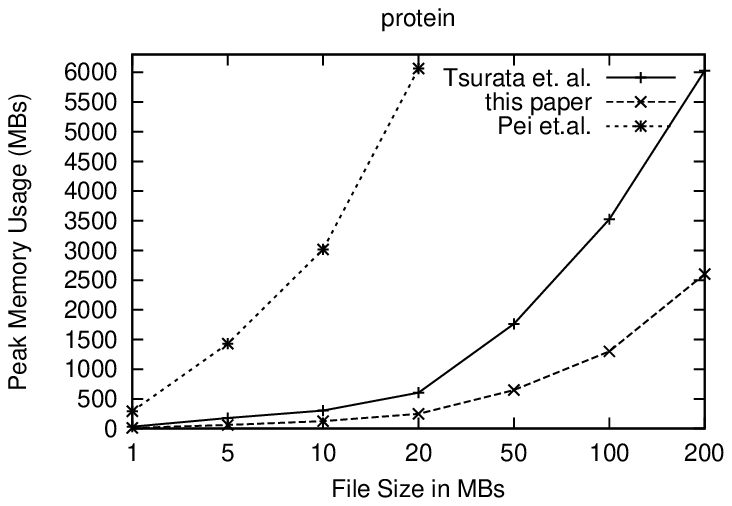} 
\end{tabular}
\end{center}
\caption{The peak memory consumptions
  of RSUS, OSUS, and this study's proposal on several files of different sizes.}
\label{fig::exp-space}
\end{figure} 

\section{Experiments}
\label{sec:exp}

We have implemented our proposal without engineering optimization effort in
C++ by using the \texttt{libdivsufsort}\footnote{Available at:
  \url{https://code.google.com/p/libdivsufsort}.} library for the
suffix array construction and Kasai \emph{et.\ al.}'s
method~\cite{KLAAP01} to compute the LCP array.  We have compared our
work against Pei \emph{et.\ al.}'s RSUS~\cite{PWY-ICDE2013} and
Tsurata \emph{et.\ al.}'s~\cite{TIBT2014} OSUS implementations, a
recent independent work obtained via personal communication. Notice
that OSUS also computes the suffix array with the same
\texttt{libdivsufsort} package.

RSUS was prepared with an R interface.  We stripped off that R
interface and build a standalone C++ executable for the sake of fair
benchmarking.  OSUS was originally developed in C++. We run it with
the \texttt{-l} option to compute a single leftmost SUS for a given
position rather than its default configuration of reporting all SUSs.
We also commented the sections that print the results to the screen on
all three programs to be able to measure the algorithmic performance
better.

We run the tests on a machine that has Intel(R) Core(TM) i7-3770 CPU @
3.40GHz processor with 8192 KB cache size and 16GB memory.  The
operating system was Linux Mint 14. We used the Pizza\&Chili
corpus
in
the experiments by taking the first 1, 5, 10, 20, 50, 100, and 200 MBs
of the largest \textit{dblp.xml, dna, english,} and \textit{protein}
files.  The results are shown in Figure~\ref{fig::exp-time} and
\ref{fig::exp-space}.

It was not possible to run the RSUS on large files, since RSUS
requires more memory than that our machine has, and thus, only up to 20MB
files were included in the RSUS benchmark.  Compared to RSUS, we have
observed that our proposal is more than 4 times faster and uses 20 times
less memory.  The experimental results revealed that OSUS is on the
average 1.6 times faster than our work, but in contrast, uses 2.6
times more memory.

The asymptotic time and space complexities of both ours and OSUS are
same as being linear (note that the $x$ axis in both
figures uses log scale).  The peak memory
usage of OSUS and ours are different although they both use
suffix array, rank array (inverse suffix array), and the LCP array,
and computing these arrays are done with the same library
(libdivsufsort).  The difference stems from different ways these
studies follow to compute the SUS.  OSUS computes the SUS by using an
additional array, which is named as the meaningful minimal unique
substring array in the corresponding study.  Thus, the space used for
that additional data structure makes OSUS require more memory.

Both OSUS and our scheme presents stable running times on all dblp,
dna, protein, and english texts and scale well on increasing sizes of
the target data conforming to their linear time complexity.  On the
other hand RSUS exhibits its $O(n^2)$ time complexity on all texts,
and especially its running time on english text takes much longer when
compared to other text types.

\section{Conclusion}
\label{sec:con}
We revisited the shortest unique substring finding problem and
proposed an optimal linear-time and linear-space algorithm for finding
the SUS for every string location. Our algorithm significantly
improved the recent work~\cite{PWY-ICDE2013} both theoretically and
empirically. Our work is independently discovered without knowing
another recent linear-time and linear-space solution that is to appear
in~\cite{TIBT2014} and uses a different approach with
competitive performance.


\vspace*{-2mm}

\bibliographystyle{splncs03}

\bibliography{bibjsv,repeat,pm}

\newpage

\section*{Appendix}

\begin{algorithm}[h!]
{\small
  \caption{Find all the SUSes covering a given location $k$.}
\label{algo:one-all}
\KwIn{The location index $k$, and the rank array and 
      the lcp array of the string $S$} 
\KwOut{All the SUSes covering location $k$.}

\smallskip 

$start \leftarrow 1$; $length \leftarrow n$ \tcp*{The start location
  and length of the best candidate for $\sus_k$.}
\label{line:start-all}

\smallskip 

\tcc{Find the length of $\sus_k$.}
\For{$i=1, \ldots, k$\label{line:for-all}}{
  $L \leftarrow \max\{\lcp[\rank[i]],\lcp[\rank[i]+1]\}$\;
 \If(\tcp*[f]{$\lsus_i$ exists.}){$i+L\leq n$\label{line:if-1-all}}{ 
    \If(\tcp*[f]{Extend $\lsus_i$ to location $k$ if necessary.})
       {$\max\{L+1,k-i+1\} < length$\label{line:if-2-all}}{
      $start \leftarrow i$;
      $length \leftarrow \max\{L+1,k-i+1\}$\;
    }
  }
  \lElse(\tcp*[f]{Early stop.}){break\label{line:earlystop-all}}
}

\tcc{Find all SUSes covering location $k$.}
\For{$i=1, \ldots, k$\label{line:for-all-2}}{
  $L \leftarrow \max\{\lcp[\rank[i]],\lcp[\rank[i]+1]\}$\;
 \If(\tcp*[f]{$\lsus_i$ exists.}){$i+L\leq n$\label{line:if-1-all-2}}{ 
    \If(\tcp*[f]{Extend $\lsus_i$ to location $k$ if necessary.})
       {$\max\{L+1,k-i+1\} = length$\label{line:if-2-all-2}}{
         Print $(i, \max\{L+1,k-i+1\})$\;
    }
  }
  \lElse(\tcp*[f]{Early stop.}){break\label{line:earlystop-all-2}}
}

}
\end{algorithm}


\begin{algorithm}[h!]
{\footnotesize
  \caption{Find $\sus_k$, for $k=1,\ldots, n$. All SUSes are
    returned if $k$ has multiple SUSes.}
\label{algo:all}

\smallskip 

\For{$k \leftarrow 1 \ldots n$\label{line-all:for1}}{
  $flag \leftarrow 0$;
   $(start, length)\leftarrow \mathit{FindSLS}(k)$\label{line-all:update}
   \tcp*{$\sls_k$;  ${\tt (null, null)}$ if $\sls_k$ does not exist. }
    
   \smallskip 

    \lIf{$k=1$\label{line-all:if-1.1}}{
      Print $\sus_k \leftarrow (start, length)$\label{line-all:if-1.2}\;}
    \lElseIf{$\sus_{k-1}.start + \sus_{k-1}.length -1 > k - 1$\label{line-all:if-2.1}}{
      Print $\sus_k \leftarrow (start,length) \label{line-all:if-2.2}$;
      $flag \leftarrow 1$\;
   }
    \lElseIf{$(start,length)={\tt (null,null)}$\label{line-all:if-3.1}}{
      Print $\sus_k \leftarrow (\sus_{k-1}.start, \sus_{k-1}.length + 1)$\label{line-all:if-3.2}\;
    }
    \lElseIf{$length < \sus_{k-1}.length+1$\label{line-all:if-5.1}}{
      Print $\sus_k \leftarrow (start, length)$\label{line-all:if-5.2};
      $flag \leftarrow 1$\;
    }
    \ElseIf(\tcc*[f]{Print the LSUS, so it won't be lost due to the linked list 
            update at Line~\ref{line:del-start}--\ref{line:del-end} in Algorithm~\ref{algo:list}})
    {$length = \sus_{k-1}.length+1$}{
      Print $\sus_k \leftarrow (start, length)$;
      $flag \leftarrow 1$\;
    }
    \lElse
    {
      Print $\sus_k \leftarrow (\sus_{k-1}.start, \sus_{k-1}.length +
      1)$\label{line-all:if-4.2};
    }

\smallskip

    \tcc{Print out other SUSes that cover location $k$.}
    \If{$flag = 1$\label{line:all-1}}{
      \lIf{$\sus_{k-1}.length+1 = \sus_k.length$}{
        Print $(\sus_{k-1}.start, \sus_{k-1}.length + 1)$\;
      }

      $j\leftarrow head$\;
      \While{$j>0$ and $j\leq tail$}{
        \tcc{$List[j].start \neq \sus_k.start$ condition checking is because the SUS from {\tt head} node may have been printed.}
        \If{$List[j].length == \sus_k.length$ and $List[j].start \neq \sus_k.start$}{
          Print $(List[j].start,List[j].length)$;
          $j\leftarrow j+1$\;
        }
        \lElse{Break;\label{line:all-2}}
      }
    }
}
}
\end{algorithm}

\newpage

\subsubsection*{Proof for Lemma~\ref{lem:lsus}.}

\begin{proof}
  Note that $L_i$ is the length of the lcp between the suffix
  $S[i\ldots n]$ and any other suffix of $S$.  If $i+ L_i \leq n$, it
  means substring $S[i\ldots i+L_i]$ exists and is unique, while
  substring $S[i\ldots i+L_i-1]$ is either empty or is a repeat, so
  $S[i\ldots i+L_i]$ is $\lsus_i$. On the other hand, if $i+ L_i > n$,
  it means $S[i\ldots i+L_i-1]$ is indeed the suffix $S[i\ldots n]$
 and is a repeat, so $\lsus_i$ does not exist. \qed
\end{proof}

\subsubsection*{Proof for Lemma~\ref{lem:ext}}
\begin{proof}
  Let's say we are looking at $\sus_k$ for any $k\in \{1\ldots n\}$. 
We know $\sus_k$ exists for any $k$, so let's say $\sus_k = S[i\ldots
j]$, $1\leq i\leq k \leq j\leq n$. If $S[i\ldots j]$ is neither
$\lsus_i$ nor an extension of $\lsus_i$, it means $S[i\ldots j]$ is a
repeat. It is a contradiction because $S[i\ldots j] = \sus_k$, which
is unique.  \qed
\end{proof}

\subsubsection*{Proof for Lemma~\ref{lem:one}}
\begin{proof}
  The procedure starts with the candidate $S[1\ldots n]$, which is
  indeed unique (Line~\ref{line:start}). Then the
  {\tt For} loop calculates the $\lsus_i$ for $i=1,2,\ldots,k$
  (Lemma~\ref{lem:lsus}). If $\lsus_i$ exists (Line~\ref{line:if-1})
  and the length of $\lsus_i$ or its up-to-$k$ extension is less than
  the length of the current best candidate (Line~\ref{line:if-2}),
  then we will pick that $\lsus_i$ or its up-to-$k$ extension as the
  new candidate for $\sus_k$. This also resolves the possible ties by
  picking the leftmost candidate.  In the end of the procedure, we
  will have the shortest one among $\lsus_1\ldots \lsus_k$ or their
  up-to-$k$ extensions, and that is $\sus_k$. Early stop is made at
  Line~\ref{line:earlystop} if the $\lsus$ being calculated does not
  exist, because that means all the remaining $\lsus$es to be
  calculated do not exist either. Each step in the {\tt For} loop
  costs $O(1)$ time and the loop executes no more than $k$ steps, so
  the procedure takes $O(k)$ time. \qed
\end{proof}

\subsubsection{Proof for Lemma~\ref{lem:ext2}}
\begin{proof}
  Because $\sus_k$ is an extension of an LSUS, we have
  $\sus_k=S[i\ldots k]$ for some $i < k$ and $\lsus_i = S[i\ldots j]$
  for some $j<k$.  We also know $S[i\ldots k-1]$ is unique, because
  the unique substring $S[i\ldots j]$ is a prefix of $S[i\ldots k-1]$.
  Note that any substring starting from a location before $i$ and
  covering location $k-1$ is longer than the unique substring
  $S[i\ldots k-1]$, so $\sus_{k-1}$ must be starting from a location
  between $i$ and $k-1$, inclusive.
  Next, we show $\sus_{k-1}$ actually must start at location $i$.  The
  fact $\sus_k = S[i\ldots k]$ tells us that $|\lsus_t| \geq |\sus_k|
  = k-i+1$ for every $t=i+1, i+2, \ldots, k$; otherwise, any $\lsus_t$
  that is shorter than $k-i+1$ would be a better candidate than
  $S[i\ldots k]$ as $\sus_k$.  That means, any unique substring
  starting from $t=i+1, i+2, \ldots, k-1$ has a length at least
  $k-i+1$. However, $|S[i\ldots k-1]| = k-i < k-i+1$ and $S[i\ldots
  k-1]$ is unique already and covers location $k-1$ as well, so
  $S[i\ldots k-1]$ is the only candidate for $\sus_{k-1}$. This also
  means $\sus_k$ is indeed the substring $\sus_{k-1}$ appended by $S[k]$.
  \qed
\end{proof}

\subsubsection*{Proof for Lemma~\ref{lem:exist}}
\begin{proof}
(1) $\lsus_1$ must exist, because the string $S$ can be
$\lsus_1$ if every proper prefix of $S$ is a repeat. 
(2) If $\lsus_k$ exists, say $\lsus_k = S[k \ldots \gamma_k]$, then 
$\lsus_i$ exists for every $i\leq k$, because at least $S[i \ldots \gamma_k]$ is
unique due to the fact that, $S[k \ldots \gamma_k]$ is unique and also is a suffix
of $S[i \ldots \gamma_k]$. 
(3) If $\lsus_k$ does not exist, it means $S[k \ldots n]$ is a repeat,
and thus every suffix $S[i\ldots n]$ of $S[k \ldots n]$ for $i<k$ is
also a repeat, indicating $\lsus_i$ does not exist for every $i\geq k$.
\qed
\end{proof}

\subsubsection*{Proof for Lemma~\ref{lem:list-time}}
\begin{proof}
  All operations in $\mathit{FindSLS}(k)$ clearly take $O(1)$ time,
  except the {\tt while} loop at Line~\ref{line:while}, which is to
  merge linked list nodes whose candidates can be shorter. Thus, the
  lemma will be proved, if we can prove the amortized number of linked nodes
  that will be merged via that
  {\tt while} loop is also bounded by a constant. Note that any node
  in the linked list never splits. In the sequence of function
  calls
  $\mathit{FindSLS}(1), \ldots,
  \mathit{FindSLS}(n)$, there are at most $n$ linked list nodes to be
  merged. We know 
  the number of merge operations in merging $n$ nodes into one node
  (in the worst case) is
  no more than $O(n)$. So the amortized cost on merging nodes in one 
 $\mathit{FindSLS}()$ function call is $O(1)$. This finishes the proof
 of the lemma. 
\remove{
  The only scenario where the {\tt while} loop will take steps is: The
  linked list is not empty right before the ${\tt
    \mathit{FindSLS(k)}}$ function call and $\lsus_k$ exists.  Note
  that $|\lsus_k|$ will be the new node's candidate's length if
  merging happens. However, (1) $|\lsus_k|$ will be at least
  $|\lsus_{k-1}|-1$ (Lemma~\ref{lem:lsus2}), and (2) the tail node's
  candidate's length is at most $\lsus_{k-1}$ before the ${\tt
    \mathit{FindSLS(k)}}$ function call, and (3) The lengths of the
  candidates in different nodes are all distinct. The combination of
  the three conditions means the number of nodes that can be merged is
  no more than $3$, which also bounds the number of steps in the {\tt
    while} loop.
}
\qed
\end{proof}

\subsubsection*{Proof for Theorem~\ref{thm:one}}
\begin{proof}
  The suffix array of $S$ can be constructed by existing algorithms
  using $O(n)$ time and space (For ex., \cite{KA-SA2005}). After the
  suffix array is constructed, the rank array can be trivially created
  using $O(n)$ time and space.  We can then use the suffix array and
  the rank array to construct the lcp array using another $O(n)$ time
  and space~\cite{KLAAP01}.  Combining the time cost of
  Algo.~\ref{algo:one} (Lemma~\ref{lem:one}), the total time cost for
  finding $\sus_k$ for any location $k$ in the string $S$ is $O(n)$
  with a total of $O(n)$ space usage. If multiple candidates for
  $\sus_k$ exist, the leftmost candidate will be returned as is
  provided by Algo.~\ref{algo:one} (Lemma~\ref{lem:one}). \qed
\end{proof}

\subsubsection*{Proof for Theorem~\ref{thm:time}}
\begin{proof}
  We can construct the suffix array of the string $S$ in a total of
  $O(n)$ time and space using existing algorithms (For ex.,
  \cite{KA-SA2005}).  The rank array is just the inverse suffix array
  and can be directly obtained from SA using $O(n)$ time and space. Then
  we can obtain the lcp array from the suffix array and rank array
  using another $O(n)$ time and space~\cite{KLAAP01}. So the total
  time and space costs for preparing these auxiliary data structures
  are $O(n)$.

  \emph{Time cost.} The amortized time cost for each $\mathit{FindSLS}$ function call
  at Line~\ref{line:update}  in the sequence of function
  calls 
  $\mathit{FindSLS}(1),  \ldots,
  \mathit{FindSLS}(n)$  is $O(1)$
 (Lemma~\ref{lem:list-time}).
  The time cost for Line~\ref{line:if-1.1}--\ref{line:if-4.2} is also
  $O(1)$.  There are a total of $n$ steps in the {\tt For} loop,
  yielding a total of $O(n)$ time cost. 

  \emph{Space usage.}  The only space usage (in addition to  
the auxiliary data structures such as suffix array, rank
 array, and the lcp array, which cost a total of $O(n)$
 space)
in our algorithm is the
 dynamic  linked list, which however has no more than $n$ node at any time. Each
 node costs $O(1)$ space. Therefore, the linked list costs $O(n)$
 space. Adding the space usage of the auxiliary data structures, we
 get the total space usage of finding every SUS is $O(n)$.  \qed
\end{proof}

\end{document}